\documentclass[lettersize,journal]{IEEEtran}
\usepackage{amsmath,amsfonts,amssymb}
\usepackage{amsthm,bbold}
\usepackage[frak=mma]{mathalpha}
\usepackage{empheq}
\usepackage{algorithm,algpseudocode}
\usepackage{array}
\usepackage[caption=false,font=normalsize,labelfont=sf,textfont=sf]{subfig}
\usepackage{textcomp}
\usepackage{tcolorbox}
\usepackage{stfloats}
\usepackage{url}
\usepackage{verbatim}
\usepackage{graphicx}
\usepackage{cite}
\usepackage{tabularx}%
\usepackage{graphicx,booktabs, colortbl}
\usepackage{multirow}
\usepackage{soul}
\usepackage{xcolor,soul}
\usepackage{enumitem}

\usepackage{mathtools}
\mathtoolsset{showonlyrefs}

\newtheorem{theorem}{Theorem}
\newtheorem{corollary}{Corollary}
\newtheorem{pro}{Proposition}
\newtheorem{definition}{Definition}

\newcommand*{\etaltwo}{\textit{et al}.\@ }

\newtheorem{lemma}{Lemma}

\usepackage{bm}

\begin{document}
\title{From Content Sharing to Opinion Shaping: \\ A Theoretical Framework for Cascade-Driven Opinion Dynamics}
%\title{From Sharing to Shaping: \\ Closed-form Analysis of Opinion Dynamics on Cascades}
%\title{Closed-Form Analysis of Cascade-Driven Opinion Dynamics in Social Networks}
%\title{A Theoretical Framework for Cascade-Driven Opinion Dynamics in Social Networks}
% other idea:
% Opinion Propagation Through Cascades: Theory, Approximations, and Community Effects

\author{Elisabetta Biondi, Chiara Boldrini, Andrea Passarella, Marco Conti
        % <-this % stops a space
\thanks{All authors are with the Institute of Informatics and Telematics (IIT) of the National Research Council (CNR), Italy. email: first.last@iit.cnr.it}% <-this % stops a space
% \thanks{The work of M. Conti and A. Passarella is partly supported by PNRR - M4C2 - Investimento 1.3, Partenariato Esteso PE00000013 - ``FAIR - Future Artificial Intelligence Research'' - Spoke 1 "Human-centered AI", funded by the European Commission under the NextGeneration EU programme. The work of C. Boldrini is partly supported by project SERICS (PE00000014) under the MUR National Recovery and Resilience Plan funded by the European Union - NextGenerationEU.}% <-this % stops a space
}

% The paper headers
%\markboth{Journal of \LaTeX\ Class Files,~Vol.~14, No.~8, August~2021}%
%{Shell \MakeLowercase{\textit{et al.}}: A Sample Article Using IEEEtran.cls for IEEE Journals}

%\IEEEpubid{0000--0000/00\$00.00~\copyright~2021 IEEE}
% Remember, if you use this you must call \IEEEpubidadjcol in the second
% column for its text to clear the IEEEpubid mark.

\maketitle

\begin{abstract}
Online social networks (OSNs) tightly couple information diffusion and opinion formation, yet most existing models treat these processes separately. Building on the Friedkin–Johnsen on Cascades (FJC) model, which integrates opinion updating within stochastic information cascades, we advance its analytical characterization and systematic validation. We derive an exact conditional layer-wise recursion that reveals the statistical coupling between diffusion realizations and opinion evolution, showing that a closed-form recursion in unconditional expectations is not available. To overcome this limitation, we introduce a mean-field decoupling approximation (MFA) that yields a tractable linear recursion for the expected opinion trajectory. We also develop scalable tree-based reductions that drastically reduce computational complexity while preserving structural diffusion properties, and validate the analytical recursion and its approximations through extensive simulations.
Using Stochastic Block Model  networks with controlled initial opinion configurations, we show that, unlike the classical Friedkin–Johnsen model, which typically leads to global homogenization, the FJC dynamics generate richer regimes of bounded assimilation and sustained polarization. Overall, this work bridges diffusion and influence modeling, providing a theoretical foundation for understanding how cascade-driven interactions reshape collective opinion landscapes in structured social networks.

\end{abstract}

\begin{IEEEkeywords}
information cascades, opinion dynamics, polarization, Friedkin-Johnsen model, viral content, homophily, SBM
\end{IEEEkeywords}

\vspace{-10pt}
\section{Introduction}
\label{sec:intro}

\IEEEPARstart{O}{nline} social networks (OSNs) have profoundly reshaped how individuals access information, communicate, and form opinions. Platforms such as X (formerly Twitter), Facebook, Instagram, and TikTok have turned communication into a large-scale, real-time, and highly interactive process, transcending geographical and social boundaries. Nowadays, OSNs are used to share and express opinions, related to news\cite{shearer2021news,pentina2014information}, review products or services~\cite{pradhan2023social}, and even health advice~\cite{smith2008social}. 
For this reason, political actors, marketers, and even news organizations now complement traditional channels with posts, reshares, and user-generated content to enhance persuasion, visibility, or engagement~\cite{allcott2017social,akpinar2017valuable}. Empirical evidence shows that these processes are tightly intertwined: in a 24-hour snapshot of Twitter activity, more than half of all posts were retweets, while only one-fifth were original messages~\cite{pfeffer2023just}. 
Consequently, information dissemination and opinion formation are no longer separable processes: they co-occur and mutually reinforce each other within the same digital infrastructure. %As OSNs increasingly mediate access to information and public discourse, the boundaries between interpersonal communication, media exposure, and social influence have become more porous~\cite{pentina2014information}.

Traditionally,  research on opinion dynamics and information diffusion has largely developed along independent trajectories, where opinion dynamics models describe how individuals revise their beliefs under social influence, and diffusion models study how information, innovations, or behaviors propagate across networks. 
On the one hand, among classical opinion dynamic frameworks, the \emph{Friedkin-Johnsen (FJ)} model~\cite{friedkin1990social} remains a cornerstone. Belonging to the family of averaging models~\cite{degroot1974reaching,french1956formal}, it formalizes opinion evolution as an iterative weighted averaging process in which each agent updates its belief as a convex combination of its initial predisposition and the opinions of its neighbors. 
Due to its mathematical tractability and its empirical validation in small- to medium-scale social settings, the FJ model has gained considerable popularity, and has been used to explain collective wisdom~\cite{Das2013}, polarization~\cite{biondi2023dynamics}, filter bubbles~\cite{chitra2020analyzing}, and even negotiation processes~\cite{bernardo2021achieving}.
On the other hand, the dissemination of content is often modeled as an \emph{information cascade}~\cite{easley2010networks}, where each reshare depends on prior exposures and perceived relevance. These cascades act as vehicles through which both information and opinions travel, dynamically shaping collective beliefs.
%
%However, online communication introduces new forms of social exposure that differ substantially from offline deliberation. In OSNs, influence is shaped by algorithmic filtering, selective exposure, and asymmetric visibility of opinions~\cite{garimella2018quantifying,suarez2025cross}. Individuals encounter others’ views indirectly—through posts, comments, and reshares—rather than mutual conversation. Such asymmetric and large-scale exposure reinforces cognitive dissonance~\cite{festinger1957theory,tsfati2003media} and amplifies emotionally charged or polarizing content~\cite{vosoughi2018spread,delvicario2016misinformation}. The dissemination of this content is often modeled as an \emph{information cascade}~\cite{easley2010networks}, where each reshare depends on prior exposures and perceived relevance. These cascades act as vehicles through which both information and opinions travel, dynamically shaping collective beliefs.
%

Recently, there have been efforts to blur the lines of this disciplinary separation, where information cascades are also cascades of opinions, chains through which attitudes are amplified, reshaped, or contested as they spread~\cite{tu2022viral,brooks2024opinionreproductionnumberinfodemics}. The Friedkin–Johnsen on Cascades (FJC) model~\cite{biondi2026cascade} is an analytical and computational framework that explicitly links information diffusion and opinion dynamics in online networks, extending the classical FJ formulation by embedding opinion updates within the temporal and structural process of an information cascade. As content spreads, each reshare transmits not only the post but also the sharer’s current opinion, allowing beliefs to evolve through exposure. Thus, diffusion becomes a mechanism of influence: opinion change propagates along the same paths as information.
In~\cite{biondi2026cascade}, we analyzed the FJC model through simulations, showing that cascade-mediated exposure can drastically alter classical FJ dynamics by amplifying the influence of central nodes and reshaping polarization patterns observed in the classical FJ.  
However, a rigorous analytical characterization of cascade-driven opinion trajectories has remained open.
%However, the polarization metrics used in that analysis, drawn from existing opinion dynamics literature~\cite{biondi2023dynamics}, typically consider only the distribution of opinion values, neglecting how these opinions are arranged within the network. However, one of the main characteristics in social networks is the \emph{homophily}, the tendency of similar individuals to connect, which critically affects the emergence and persistence of group divisions and polarization~\cite{mcpherson2001birds}. Intuitively, opinion values themselves cannot take into account this dimension, as identical opinion distributions can yield very different outcomes depending on how opinions are positioned across the network structure. Since homophily is not inherently encoded in the FJ model (unlike, for example, in the bounded confidence model~\cite{hegselmann2002opinion}), this limitation becomes particularly relevant, as the FJ framework has been criticized for its limited ability to reproduce clustered or polarized opinion structures~\cite{bernardo2024bounded}.

%In this work, we address these limitations in two main directions.  
In this paper, we provide such a characterization. First, we derive an exact conditional layer-wise recursion that explicitly reveals the statistical coupling between diffusion realizations and opinion evolution. This result shows that, in general, a closed recursion involving only unconditional expectations cannot be obtained. Second, to achieve analytical tractability, we introduce a mean-field decoupling approximation (MFA) that neglects the diffusion–opinion correlation at the expectation level. This approximation yields a closed linear recursion for the expected opinion trajectory, enabling scalable computation and theoretical analysis. Importantly, we empirically validate the accuracy of this analytical recursion through extensive simulations of the full stochastic process. 
We further develop scalable tree-based reductions that drastically lower computational complexity while preserving the structural properties of diffusion, and we validate their performance against both the analytical model and simulations. 

%In this work, we make a step forward in two main directions.
%First, we provide the analytical formulation of the FJC model, together with theoretical results and efficient numerical approximations that enable scalable computation of opinion trajectories and extensive validation through simulations. This analytical perspective is particularly important, as it provides direct insight into how influence propagates through the network, enabling the identification of the most influential nodes and pathways. Moreover, it lays the groundwork for the development of analytical tools to design and optimize targeted interventions—either at the network level or along the propagation flow—to mitigate the spread of toxic or harmful behaviors. \\
Finally, we extend the analysis of FJC and study its behavior in networks composed of communities, where initial opinion configurations may or may not align with the community structure. In other words, the network topology itself is not assumed to be homophilic; instead, homophily emerges (or does not) depending on how initial opinions are assigned.  %Specifically, we examined how local exposure patterns within cascades translate into large-scale opinion shifts when opinions are clustered by community. 
To explore these scenarios, we conduct experiments on synthetic social structures generated through \emph{Stochastic Block Models (SBMs)} with two communities and controlled initial opinion configurations, conducting a series of \emph{what-if} analyses to probe the conditions under which cascades lead to homogenization, bounded assimilation, or radical divergence. 
%Second, we extend the analysis of FJC to incorporate \emph{homophilic polarization}, examining how local exposure patterns within cascades translate into large-scale opinion shifts when opinions are clustered by community. Specifically, we conduct experiments on synthetic social structures generated through \emph{Stochastic Block Models (SBMs)} with two communities and controlled initial opinion configurations, performing a set of \emph{what-if} analyses to explore the conditions under which cascades lead to homogenization, bounded assimilation, or radical divergence. 
%
Our findings show that the averaging property of the FJ model, which by design brings opinions closer together, is dramatically altered in the presence of cascades. Firstly, cascades preserve larger intergroup opinion distances (\emph{polarization}). Secondly, even in scenarios where the classical FJ model leads to global \emph{homogenization}, the FJC model exhibits \emph{bounded assimilation}, whereby groups of initially aligned opinions maintain their original structure. These results suggest that the FJC model captures more realistic online behavior, in which initially aligned opinion groups tend to preserve their structure. 
%Our findings show that while the classical FJ model typically drives networks toward global \emph{homogenization} through iterative averaging, the FJC model captures more realistic online behaviors. This gives rise to regimes of \emph{bounded assimilation}, in which opinions converge within initially aligned groups, and \emph{polarization}, where intergroup exposure intensifies divergence. 
Overall, the proposed framework bridges a long-standing divide between diffusion and influence modeling, offering a theoretical foundation for understanding how cascade-driven interactions reshape collective opinion landscapes in networks with structured communities, independent of whether those communities initially exhibit opinion homophily.

\vspace{-5pt}
\section{Related Work}
\label{sec:relwork}
Research on FJ-based opinion dynamics has evolved along two main directions.  
The first focuses on optimization and intervention strategies~\cite{Gionis2013,Bindel2015,Tu2023,Sun2023,abebe2021opinion}, where the goal is to steer collective opinions through minimal or targeted influence actions.  
The second extends the classical model to incorporate richer social mechanisms, such as antagonism~\cite{bilo2022general,Razaq2023}, temporal interactions~\cite{Bhalla2023}, and algorithmic filtering~\cite{chitra2019understanding}.  

Several studies explicitly incorporate \emph{homophily}, the tendency of similar individuals to connect~\cite{mcpherson2001birds}.  
In~\cite{disaro2024balancing}, the social network is time-varying and homophily-dependent, allowing users to rewire their connections to increase opinion agreement.
In contrast, we focus on static networks, assuming that social ties reflect relatively stable affective or identity-based relationships.
Dandekar~\etaltwo~\cite {dandekar2013biased} incorporate biased assimilation, whereby agents are more strongly influenced by opinions aligned with their own. In~\cite{chitra2019understanding}, Chitra \etaltwo introduce an algorithmic mediator that modifies network edges to reduce exposure to opposing views, thus modeling the emergence of filter bubbles.
Both approaches examine homophily within SBM structures, as we do; however, their polarization measures capture only local disagreement\cite{dandekar2013biased} or global divergence\cite{chitra2019understanding}.
Our approach extends beyond these metrics by analyzing polarization in both opinion space and network topology, revealing how cascades can sustain, amplify, or mitigate polarization.
This enables us to link micro-level diffusion to macro-level structural polarization, providing insights into how opinion diversity and network segregation jointly shape collective outcomes in online social systems.

In parallel, research on \emph{information diffusion}, notably the Independent Cascade (IC) model~\cite{kempe2003maximizing,guille2013information,zhou2021survey}, has explored the spread of content and behaviors.   While the diffusion literature primarily focuses on the reach and speed of propagation, less attention has been paid to how these diffusion processes co-evolve with opinion formation.
Few works couple diffusion with opinion formation~\cite{tu2022viral,brooks2024opinionreproductionnumberinfodemics}. Both couple diffusion and opinion dynamics only loosely: in the former, content affects opinions once through initial bias shifts, while in the latter, opinion adoption depends solely on similarity to the source, modeling virality rather than belief revision. 
Our prior study~\cite{biondi2026cascade} introduced the FJC model, integrating the FJ influence mechanism into the temporal structure of diffusion cascades, linking opinion revision directly to exposure events.  
In the present study, we build on that framework analytically, deriving theoretical results and approximate solutions to estimate steady-state opinions under cascade-driven exposure.
%This bridges the gap between simulation-based exploration and theoretical characterization, allowing efficient evaluation of opinion trajectories in large-scale networks and a systematic assessment of how the temporal and structural properties of cascades shape opinion evolution.

\vspace{-5pt}
\section{Preliminaries and Model Overview}
\label{sec:preliminaries}

%This section introduces the notation and theoretical background required throughout the paper.

%\vspace{-10pt}
\subsection{Network Representation}
\label{sec:social_network}
The social network is modeled as a directed graph $\mathcal{G}=(\mathcal{V},\mathcal{E})$, 
with  $\mathcal{V}=\{1,\ldots,n\}$ representing the set of users and 
$\mathcal{E}\subseteq \mathcal{V}\times\mathcal{V}$ the set of directed ties.
An edge $(j,i)\in\mathcal{E}$ indicates that information flows from user $j$ to user $i$, namely,  $i$ can observe and be influenced by $j$.  For each node $i$, let %$\mathcal{N}^+(i)=\{j:(i,j)\in\mathcal{E}\}$ be the users influenced by $i$, and 
$\mathcal{N}^{in}(i)=\{j:(j,i)\in\mathcal{E}\}$ be the set of users influencing $i$. %; these coincide in undirected graphs.  
Interactions between each pair of nodes $(i,j)$ typically quantify the strength of their relationships, denoted as $\hat{w}_{ij}$ and are collected into a symmetric matrix $\hat{W}=(\hat{w}_{ij})\in\mathbb{R}^{n\times n}$, referred to as \emph{social matrix}.  The corresponding \emph{influence matrix} $W=(w_{ij})$ is row-stochastic, where $w_{ij}\in[0,1]$ denotes the influence of $j$ on $i$ and is typically obtained by normalizing $\hat{W}$,  
$w_{ij}=\hat{w}_{ij}/\sum_k \hat{w}_{ik}$.  
Although $\hat{W}$ is symmetric, $W$ may be asymmetric, reflecting unequal influence weights.

\vspace{-10pt}
\subsection{Information cascade}
\label{sec:information_cascade} 

To describe information propagation, we employ the \emph{independent cascade (IC)} model~\cite{kempe2003maximizing}. In this model, when a node $v$ in the network posts content, it transitions from an  \emph{inactive} state to an  \emph{active} state, i.e. it \emph{activates}. Once this transition occurs, the node not only spreads the content to its peers, but also has the chance to activate those neighbors who are still in the inactive state. This activation process is represented by the probability \( p_{vw} \), where \( w \) is a neighbor of \( v \) and the pair \( (v,w) \) belongs to the set of edges \( \mathcal{E} \).
%Once activated, a node has only one chance to influence its neighbors, and the diffusion process continues until no further activations occur. 
%This stochastic framework effectively captures online dissemination phenomena such as retweeting or reposting behavior, in which exposure and reshare occur at most once per post.

%Our model conceptualizes the act of posting social content following the independent cascade model~\cite{kempe2003maximizing}. In the independent cascade model, each node can be \emph{active} or \emph{inactive}. An initial set of active nodes $v$ have a single chance to activate their neighbors $w$, and each activation is successful with probability $p_{vw}$. The nodes $w$ that are successfully activated, become active at the next step and try to activate their neighbors as well.  The process continues until no other nodes can be activated.

\vspace{-10pt}
\subsection{Friedkin-Johnsen  model}
\label{sec:fj}
The Friedkin-Johnsen (FJ) model~\cite{friedkin1990social} formalizes how opinions evolve under social influence and individual conviction.  
Each agent $i$ holds an opinion $x_i(t)\in[-1,1]$ at discrete time $t$, shaped by its initial belief $u_i$ and a susceptibility parameter $\lambda_i\in[0,1]$ controlling openness to influence.  
Agents with $\lambda_i = 0$ are stubborn, while those with $\lambda_i= 1$ are fully adaptive.  
Let $\Lambda=\mathrm{diag}(\lambda_1,\ldots,\lambda_n)$ be the diagonal susceptibility matrix and $W$ be the row-stochastic influence matrix, described in Section~\ref{sec:social_network}, the opinion update rule for a node $i$ is given by:
\vspace{-5pt}
\begin{equation}
    x_i(t+1) = \lambda_i \sum_{j=1}^{n} w_{ij} x_j(t) + (1 - \lambda_i) u_i.
    \label{eq:fj_update} \vspace{-7pt}
\end{equation}
This equation describes a linear convex combination between  an individual's current belief (anchored to $u_i$) and the weighted average opinion of its influencers. The parameter $\lambda_i$ therefore controls the relative influence of the social context.
The vector form of the opinion update is given by $\bm{x}(t+1) = \Lambda W \bm{x}(t) + (I - \Lambda)\bm{u}$. 
% :
% \vspace{-5pt}
% \begin{equation}
%     \bm{x}(t+1) = \Lambda W \bm{x}(t) + (I - \Lambda)\bm{u}.
%     \label{eq:fj_vector}
% \end{equation}
%
Under mild conditions, namely, a strongly connected network and at least one partially anchored node (\(\lambda_i<1\))\footnote{Please note that these assumptions are not restrictive. If connectivity is not achieved,  the problem can be defined using the connected components. Instead, when $\Lambda=I$, the FJ model simplifies to the French-DeGroot model~\cite{degroot1974reaching}, which has been widely studied.}, the process converges to a unique equilibrium~\cite{proskurnikov2017tutorial}:
\vspace{-5pt}
\begin{equation}
    \bm{z} = (I - \Lambda W)^{-1} (I - \Lambda)\bm{u}.
    \label{eq:fj_equilibrium} \vspace{-7pt}
\end{equation}

%The FJ model thus captures a continuous compromise between self-confidence and social conformity. The term $\Lambda W$ encodes the propagation of influence along network ties, while $(I-\Lambda)$ enforces persistence of individuality.

The standard FJ model assumes \emph{synchronous} updates, whereas real social exchanges are inherently asynchronous.  
To capture this, Frasca \emph{et al.}~\cite{frasca2013gossips} proposed a stochastic variant where, at each iteration, a random edge $(j,i)$ is activated and only node $i$ updates its opinion, as:
\begin{equation}
    x_i(t+1) = h_i\big[(1-\gamma_{ij})x_i(t) + \gamma_{ij}x_j(t)\big] + (1-h_i)u_i,
    \label{eq:fj_asyn_update_i}
\end{equation}
In this asynchronous formulation, the matrices $H = \{h_{ij}\}$ and $\Gamma = \{\gamma_{ij}\}$ replace the $\Lambda$ and $W$ matrices of the synchronous model, but they are not identical to them. To ensure that the asynchronous dynamics in Eq.~\eqref{eq:fj_asyn_update_i} converge to the same steady state as the synchronous model in Eq.~\eqref{eq:fj_update}, one must account for the probability that node $i$ is selected for updating, and thus for the node degree $d_i$. 
Accordingly, whenever $H$ and $\Gamma$ are defined in terms of $\Lambda$ and $W$ as follows:
\begin{align}
    h_i &= 1 - \frac{1-\lambda_i}{d_i}, \label{eq:hi}\\
    \gamma_{ij} &=
    \begin{cases}
        \dfrac{d_i(1-h_i) + h_i - (1-\lambda_i w_{ii})}{h_i}, & i=j, \\[6pt]
        \dfrac{\lambda_i w_{ij}}{h_i}, & i\neq j.
    \end{cases}\label{eq:gammaij}
\end{align}
The asynchronous process converges \emph{ergodically} to the same equilibrium $\bm{z}$ of the standard FJ model given by Eq.~\eqref{eq:fj_equilibrium}
~\cite{frasca2013gossips}. The asynchronous version is the one used in the FJC model. 
%although instantaneous opinion fluctuations, the time averages of the asynchronous process converge to the same equilibrium $\bm{z}$ given by~\eqref{eq:fj_equilibrium}, a property known as \emph{ergodic convergence}~\cite{frasca2013gossips}.

\vspace{-7pt}
\subsection{The Friedkin-Johnsen on Cascades Model}
\label{sec:FJC}

The FJC model integrates opinion dynamics with the stochastic nature of online information diffusion, coupling the FJ model with an IC process. In this framework, a cascade determines when and between which nodes interactions occur, while the FJ update rule specifies how each exposure affects opinions. Thus, the model preserves the interpretability of the classical FJ formulation, but embeds it within the short, burst-like exposures typical of online platforms. In the following, we briefly recall the operational mechanics of the FJC model as introduced in~\cite{biondi2026cascade}, which forms the basis for the analytical derivation in the following section, where the contributions of the present paper begin. 
%we will briefly describe the parameters and mechanisms of the model that are necessary for the theorettical derication provided in Section~\ref{sec:modelling}. For futher details, please refer to~\cite{biondi2026cascade}.

The model considers posts spreading one at a time. When a user $r$ initiates a post, the diffusion process unfolds as a tree rooted at $r$. In fact, as for the largest part of opinion dynamics models, in FJC a node is exposed only to social activities of nodes to whom it is linked and does not incorporate any recommendation strategy. The depth of this tree is equal to its eccentricity $\epsilon^{(r)}$, defined as the maximum graph distance from $r$ to any other node; consequently, $\epsilon^{(r)} \le L$, where $L$ is the diameter of the network.  We denote this deterministic reachability tree by
$\mathcal{O}^{(r)}=\left(\mathcal{O}^{(r)}_l\right)_{l=0}^{\epsilon^{(r)}}$, where each layer  $\mathcal{O}^{(r)}_l$ contains all nodes at distance $l$ from the root $r$.  %Because the network diameter is $L$,  it holds that $\epsilon^{(r)} \le L$. 
For convenience, the tree sequence $\mathcal{O}^{(r)}$ is extended to exactly $L$ layers by defining $\mathcal{O}^{(r)}_l = \emptyset$ for $l > \epsilon^{(r)}$, thus $\mathcal{O}^{(r)}=\big(\mathcal{O}^{(r)}_0,\dots,\mathcal{O}^{(r)}_L\big)$. This convention simplifies the notation without altering the dynamics.

The actual cascade does not activate all nodes in $\mathcal{O}^{(r)}$. Instead, each node $i$ reshares the post independently with probability $\theta_i$, generating a random subtree
$\mathcal{T}^{(r,\theta)} = \big(\mathcal{T}^{(r,\theta)}_0,\dots,\mathcal{T}^{(r,\theta)}_L\big)$
of the baseline layering, i.e.  $\mathcal{T}^{(r,\theta)}_l\subseteq\mathcal{O}^{(r)}_l$ . The cascade $\mathcal{C}^{(r)}$ is the collection of nodes that are involved in $\mathcal{T}^{(r,\theta)}$, i.e. $\mathcal{C}^{(r)} = \bigcup_l\mathcal{T}^{(r,\theta)}_l$. Figure~\ref{fig:example} illustrates a simple network with the corresponding tree.

As the cascade evolves, nodes transition through the states \emph{unaware}, \emph{aware}, and \emph{spreader}. Upon receiving the post from one or more predecessors in layer $l-1$, a node $i$ becomes aware and updates its opinion. Finally, with probability $\theta_{i}$, it becomes a spreader and reshares the post, enabling nodes in layer $l+1$ to become aware in their turn, provided $l < L$. 
%Opinion updates occur exactly at moments of exposure. When a node $i$ receives the post from a set of predecessors $\varphi(i)$ in the previous layer, the model assumes that it processes all incoming opinions at the same time, reflecting how users typically encounter multiple comments or reposts at once. 
Opinion updates occur exactly at moments of exposure. When a node $i$ receives the post from predecessors $\varphi(i)$, it observes the post together with the opinions attached to all their reshares. The model assumes that $i$ processes this set of opinions as a whole and performs a single, aggregated update accordingly. Its opinion update follows the asynchronous FJ model (Eq.~\eqref{eq:fj_asyn_update_i}), and it is averaged over the influence of all its predecessors. Specifically, the opinion is updated as follows:
\begin{align} %\vspace{-10pt}
    x_{i}(t+1) &= \frac{1}{|\varphi(i)|} 
    \Bigg\{ \sum_{j\in\varphi(i)} h_{i} \big[(1-\gamma_{ij}) x_{i}(t)
    +\gamma_{ij} x_{j}(t)\big] \nonumber \\
    &\qquad + (1-h_{i})u_i \Bigg\},
    \label{eq:fjc_update} %\vspace{-10pt}
\end{align}
where $h_i$ and $\gamma_{ij}$ are defined as in the asynchronous FJ model, and $\varphi(i)$ is the set of predecessors resharing the post. Please note that Eq.~\eqref{eq:fjc_update} incorporates the concept that a node is only exposed to opinions from nodes that are engaged in a neighborhood relationship.
The opinion update process unfolds over a sequence of posts \(p_1,\dots,p_N\), rooted at nodes \(r_1,\dots,r_N\). The first cascade starts from the opinion vector $\bm{u}$; each subsequent cascade then operates on the opinion state resulting from all preceding cascades.%The first cascade begins from the initial opinion vector $\mathbf{u}$, whereas each subsequent cascade starts from the final opinions generated by the previous post. 

\begin{figure}[t]
    \centering
      \subfloat[Network]{
          \includegraphics[width=0.15\textwidth]{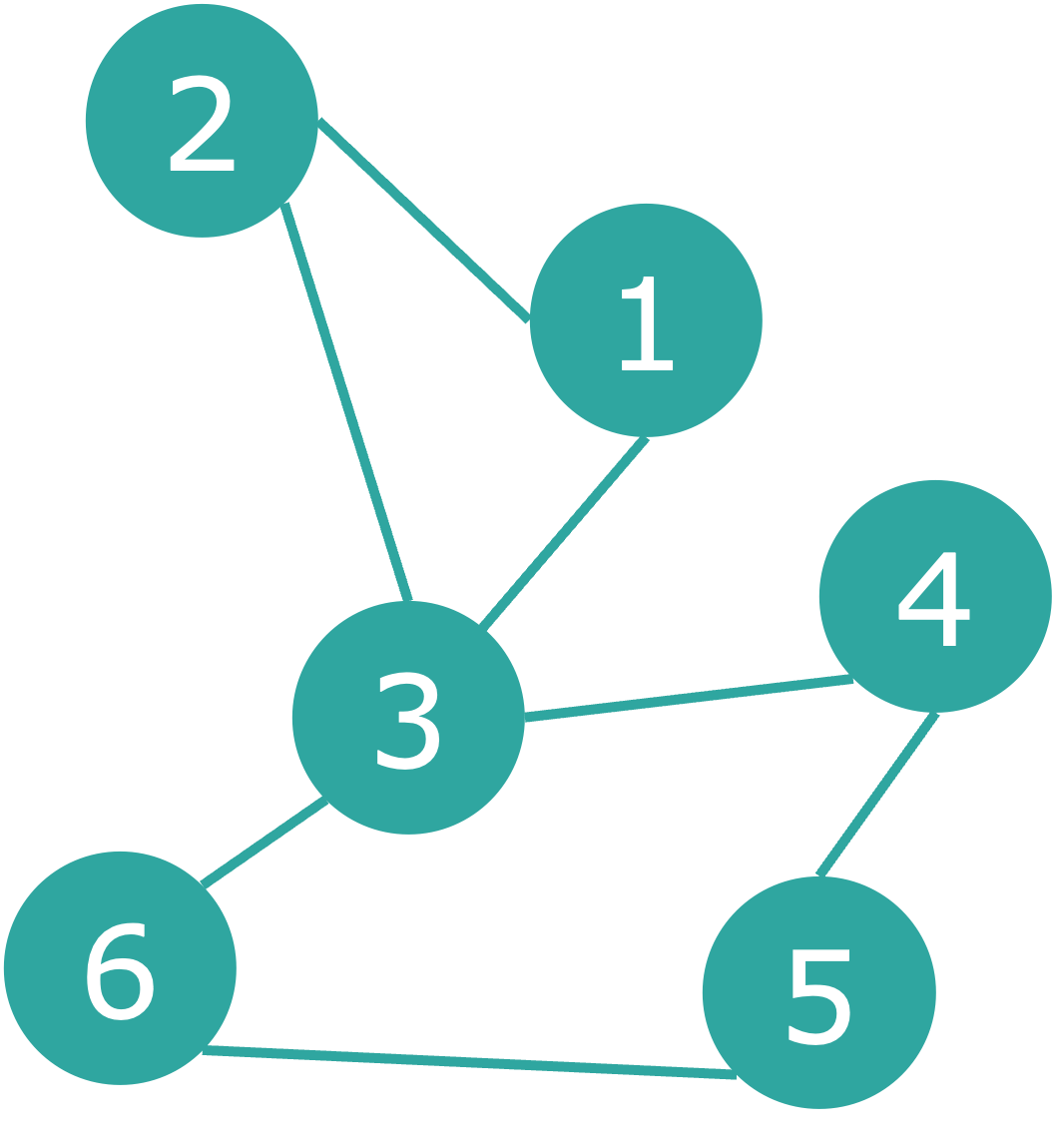}
      }\hfill
      \subfloat[$\mathcal{T}^{(1,\theta)}$ ]{
          \includegraphics[width=0.2\textwidth]{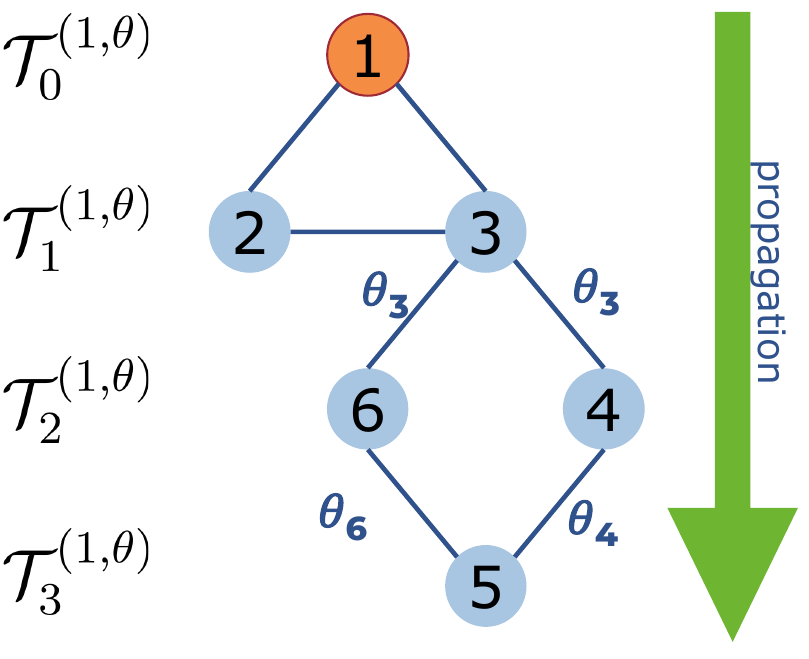}
      }\
    \caption{Example of a network (a) and the corresponding stochastic cascade tree (b) initiated by node $r=1$. Nodes $2$ and $3$ deterministically update their opinions, while the others participate probabilistically according to $\theta$.}
    \label{fig:example}
    \vspace{-10pt}
\end{figure}

% \vspace{-5pt}
% Formally, the cascade $\mathcal{C}^{(r)}$ generated by the post $p$ from node $r$ is defined as the set of nodes in $\mathcal{T}^{(r,\theta)}$. Nodes in layer~$1$ always belong to the cascade, as they receive the post directly from the root $r$. Nodes in layers $l\ge 2$ participate only if at least one predecessor in layer $l-1$ reshared the post. Let us denote by $S_i$ the Bernoulli random variable indicating whether node $i\neq r$ reshares the post ($S_i = 1$ if reshared, $0$ otherwise). Then, the cascade tree can be constructed recursively:
% \begin{equation}
%     \left\{
%     \begin{array}{l}
%         \mathcal{T}^{(r,\theta)}_{0} = \{r\},\\[2pt]
%         \mathcal{T}^{(r,\theta)}_{1} = \mathcal{N}^-(r),\\[2pt]
%         \mathcal{T}^{(r,\theta)}_{l+1} = 
%             \left(\bigcup\limits_{\{i \in \mathcal{T}^{(r,\theta)}_l : S_i = 1\}} \mathcal{N}^-(i) \right) 
%             \setminus  \mathcal{T}^{(r,\theta)}_l.
%     \end{array}
%     \right.
% \end{equation}
%The stochasticity nature of the problem implies that a node belongs to the cascade not deterministically, except for $r$ (the root of the cascade) and its neighbors that receive the post. 

Summarizing, the model is built on the following three assumptions:
\begin{enumerate}[label = (A\arabic*)]
    \item Opinion-neutral content: shared content does not carry an intrinsic opinion; opinions are conveyed only by the individuals sharing it. 
    \item Tree-structured exposure: a post propagates from the seed in a tree-like manner across successive neighbor layers, with each node influenced only upon first exposure. 
    \item Sequential cascade propagation: only one post propagates at a time; simultaneous cascades are not considered. 
\end{enumerate}
While assumptions (A2)-(A3) are introduced to ensure analytical tractability and may appear restrictive, we empirically validated them in~\cite{biondi2026cascade} using real Twitter cascade data. The results showed that, despite these simplifications, the model achieves high predictive accuracy in capturing observed opinion dynamics.
% Under these conditions, the model achieves strong predictive accuracy in reproducing observed opinion dynamics.

\vspace{-7pt}
\section{Analytical Derivation of Opinion Trajectories}
\label{sec:modelling}

The FJC framework described in Section~\ref{sec:FJC} establishes the operational rules by which opinion updates are embedded within stochastic information cascades. We now advance the framework from an algorithmic description to a rigorous analytical formulation.  In this section, we develop a theoretical formulation of the FJC dynamics. We begin by formalizing the temporal indexing and layer-based progression of the cascade (Section~\ref{sec:temporal_indexing}), which establishes the domain for the matrix representation of the opinion updates (Section~\ref{sec:matrix_rep}). We then quantify the probabilistic structure of cascade layers (Section~\ref{sec:probabilistic_structure}). These components are then combined to establish an exact conditional recursion that captures the coupling between diffusion and opinion dynamics, and to introduce mean-field and structural approximations that provide tractable predictions for large-scale networks (Sections~\ref{sec:fjc_solution} and \ref{sec:approximations}).

% In this section, we set up the notation and the preliminary results that will be useful for the model derivation in Section~\ref{sec:fjc_solution}.

\vspace{-5pt}
\subsection{Sequential Opinion Updates}
\label{sec:temporal_indexing}

Opinion updates in the FJC model (Section~\ref{sec:fj}) occur sequentially, following the order induced by the cascade realization. For each post $p_n$, nodes update their opinions layer by layer along the cascade tree, starting from the root and proceeding outward. 
Rather than tracking each individual update time, we exploit this ordering to define aggregated layer-wise states. Specifically, we denote by $\bm{y}^{(n)}_l$ the opinion vector after all nodes in layer $l$ of the $n$-th cascade have updated, with $\bm{y}^{(n)}_0$ representing the opinion vector at the beginning of the cascade. 
This construction induces a well-defined sequential update process, in which each node update depends on the current system state resulting from all prior updates. As a result, the FJC dynamics can be equivalently represented as a sequence of linear state transitions, which we formalize in the following section.

\vspace{-7pt}
\subsection{Matrix Representation of the Update Process}
\label{sec:matrix_rep}

With the sequential update structure established in the previous section, we now formalize the opinion revision mechanism at each step of the cascade. Each update corresponds to a single node revising its opinion based on the set of predecessors from which it receives the post.

Let $\bm{x}(t)$ denote the current opinion vector. When a node $i$ updates its opinion, the latter is revised according to Eq.~\eqref{eq:fjc_update}, while all other components remain unchanged. This defines a local transformation of the state, which can be expressed in matrix form as follows.

\begin{lemma}[Matrix form of the opinion update]\label{lemma:fjc_matrix} 
Let $\bm{x}(t)$ be the current opinion vector, and let node $i$ update its opinion based on the set of nodes in the previous layer that reshared the post, i.e. its predecessors $\varphi(i)$. Then the updated opinion vector $\bm{x}(t+1)$ can be written as
    \begin{equation}
        \bm{x}(t+1) = \frac{1}{|\varphi(i)|}\sum_{j\in \varphi(i)}A_{ij}\bm{x} (t)+B_i\bm{x}(t)+C_i\bm{u},
        \label{eq:fjc_matrix}
    \end{equation}
where the operators $A_{ij}$, $B_i$, and $C_i$ are defined as:
    \begin{align*}
        A_{ij}&=He_ie_i^T\left[I-\gamma_{ij}(e_ie_j^T-e_ie_i^T)\right]\\
        B_i&=I-e_ie_i^T\\
        C_i&=e_ie_i^T(I-H),
    \end{align*}
and $e_i$ is the $i$-th canonical basis vector.
\end{lemma}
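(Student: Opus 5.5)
The plan is a direct componentwise verification. Both sides of Eq.~\eqref{eq:fjc_matrix} are linear in $\bm{x}(t)$ and $\bm{u}$, so it is enough to check that they agree coordinate by coordinate. Two coordinates need separate treatment: every $k\neq i$, and the updating node $i$. Throughout, $H=\mathrm{diag}(h_1,\ldots,h_n)$. I will use only $e_k^T e_i=\delta_{ki}$ and the fact that $e_ie_i^T$ projects onto the $i$-th coordinate. In particular, $e_ie_i^T M\bm{v}=(e_i^TM\bm{v})\,e_i$ for any matrix $M$ and vector $\bm{v}$, and $He_i=h_ie_i$.

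\emph{Step 1: coordinates $k\neq i$.} Each $A_{ij}$ and $C_i$ has the form $He_ie_i^T(\cdot)$ or $e_ie_i^T(\cdot)$, so its image is spanned by $e_i$. Hence $e_k^TA_{ij}\bm{x}(t)=0$ and $e_k^TC_i\bm{u}=0$. Also $e_k^TB_i\bm{x}(t)=e_k^T\bm{x}(t)-\delta_{ki}x_i(t)=x_k(t)$. So every non-updating node keeps its opinion, as the model requires.

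\emph{Step 2: coordinate $i$.} First, $e_i^TB_i\bm{x}(t)=x_i(t)-x_i(t)=0$, so $B_i$ simply erases the old value of node $i$. Next, $e_i^TC_i\bm{u}=(1-h_i)u_i$, which is the anchoring term. Finally, expanding the bracket with $e_i^Te_i=1$ gives
\[
e_i^T A_{ij}\bm{x}(t)=h_i\big[x_i(t)\mp\gamma_{ij}(x_j(t)-x_i(t))\big].
\]
This should equal $h_i\big[(1-\gamma_{ij})x_i(t)+\gamma_{ij}x_j(t)\big]$, the pairwise asynchronous update of Eq.~\eqref{eq:fj_asyn_update_i}. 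Averaging over $j\in\varphi(i)$ and adding the anchoring term then recovers the right-hand side of Eq.~\eqref{eq:fjc_update}.

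\emph{Main obstacle: bookkeeping of conventions.} The algebra itself is routine. The delicate part is fixing the conventions so that the two expressions match exactly, and I would settle two points explicitly before the computation.

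The first point is the sign of the $\gamma_{ij}$ correction. The convex combination $(1-\gamma_{ij})x_i(t)+\gamma_{ij}x_j(t)$ equals $x_i(t)+\gamma_{ij}(x_j(t)-x_i(t))$. The bracket in $A_{ij}$ must therefore act on $\bm{x}(t)$ as $I+\gamma_{ij}(e_ie_j^T-e_ie_i^T)$ once it is restricted to row $i$. I would fix the sign convention accordingly, so that $A_{ij}$ reproduces Eq.~\eqref{eq:fj_asyn_update_i}.

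The second point is the anchoring term. I would read Eq.~\eqref{eq:fjc_update} as the average of the $|\varphi(i)|$ asynchronous updates of Eq.~\eqref{eq:fj_asyn_update_i}, one per predecessor $j$. Each of these carries its own $(1-h_i)u_i$, and their average leaves $(1-h_i)u_i$ undivided by $|\varphi(i)|$, which is exactly $C_i\bm{u}$. Under this reading, $\frac{1}{|\varphi(i)|}\sum_j A_{ij}$ reproduces the averaged social part and $B_i$ removes the stale $x_i(t)$.

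With these two conventions in place, Steps 1--2 show that the $i$-th coordinate and all other coordinates of both sides coincide, which proves the lemma.
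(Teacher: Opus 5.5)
Your componentwise check is the same argument as the paper's proof ($B_i$ keeps every coordinate except the $i$-th, while $A_{ij}$ and $C_i$ are supported on row $i$), and your reading of the anchoring term as $(1-h_i)u_i$, not divided by $|\varphi(i)|$, is the one needed both to match $C_i$ and to make the update weights sum to one. Your sign worry is also correct, and you should state it as an error in the printed statement rather than a convention: with $A_{ij}$ as printed, row $i$ gives $h_i[(1+\gamma_{ij})x_i(t)-\gamma_{ij}x_j(t)]$, so the bracket must read $I+\gamma_{ij}(e_ie_j^T-e_ie_i^T)$, a typo that the paper's ``easy algebraic computation'' passes over.
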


\begin{proof} Matrix $B_i$ is the identity matrix, except for the $(i,i)$-th entry, which is zero. Thus, the second term of Eq.~\eqref{eq:fjc_matrix} gives the opinion vectors at step $t$ except for the  $i$-th entry equal to zero. On the other hand, for all $j$, matrices $A_{ij}$ and $C_i$ are zero in all rows except the $i$-th and therefore account for the updating of $i$'s opinion only. An easy algebraic computation shows that this update is exactly equal to that of Eq.~\eqref{eq:fjc_update}. Taking all the above considerations together, we obtain the thesis.
\end{proof}

Lemma~\ref{lemma:fjc_matrix} provides a matrix representation of a single update. We now extend this formulation to an entire layer of the cascade. 
Let $\mathcal{T}_l^{(r_n,\theta)}$ denote the set of nodes in layer $l$ of the realized cascade. Starting from the opinion vector $\bm{y}_{l-1}^{(n)}$, the final state $\bm{y}_l^{(n)}$ is obtained by applying sequentially the updates of all nodes in $\mathcal{T}_l^{(r_n,\theta)}$, in the order induced by the cascade. 
Since nodes in the same layer are distinct, each update affects a different component of the state. As a consequence, the sequential application of updates can be aggregated into a single expression evaluated at the common input $\bm{y}_{l-1}^{(n)}$.

\begin{corollary}[Opinions at the end of layer $l$ as a function of opinions at the end of layer $l-1$]\label{cor:end_opinion_layer}
    The final opinion at the end of layer $l$ of the cascade $\mathcal{C}$  can be expressed in matrix form as:
    \vspace{-3pt}
    \begin{align}
        \bm{y}_l^{(n)} &=\sum\limits_{i_m\in\mathcal{T}_l^{(r_n,\theta)}} \frac{1}{|\varphi(i_m)|} \sum_{j_m\in\varphi(i_m)}A_{i_m j_m} \bm{y}_{l-1}^{(n)} +\nonumber\\
        &\qquad+\tilde{B}_{\mathcal{T}_l^{(r_n,\theta)}}\bm{y}_{l-1}^{(n)}  + \sum\limits_{i_m\in\mathcal{T}_l^{(r_n,\theta)}}  C_{i_m}\bm{u}.
        \label{eq:end_opinion_layer}
    \end{align}
    \vspace{-3pt}
    for $l\geq 1$, where $\tilde{B}_{\mathcal{T}_l^{(r_n,\theta)}} = I - \sum\limits_{k \in \mathcal{T}_l^{(r_n,\theta)}} e_ke_k^T$.
    %Consequently, the final opinion at the end of the cascade $\bm{z}^{(n)}$ is equal to $\bm{y}_L^{(n)}$.
\end{corollary}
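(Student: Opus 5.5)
We will prove the corollary by induction on the number of updates performed within layer $l$, using Lemma~\ref{lemma:fjc_matrix} for each single update. Enumerate the nodes of the realized layer as $\mathcal{T}_l^{(r_n,\theta)}=\{i_1,\dots,i_M\}$, in the order induced by the cascade. Set $\bm{x}^{(0)}=\bm{y}_{l-1}^{(n)}$, let $\bm{x}^{(m)}$ be the state after node $i_m$ has updated, and note that $\bm{y}_l^{(n)}=\bm{x}^{(M)}$. By Lemma~\ref{lemma:fjc_matrix},
\begin{equation}
\bm{x}^{(m)}=\frac{1}{|\varphi(i_m)|}\sum_{j\in\varphi(i_m)}A_{i_m j}\bm{x}^{(m-1)}+B_{i_m}\bm{x}^{(m-1)}+C_{i_m}\bm{u}.
\end{equation}

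The key observation is that only row $i_m$ of $\bm{x}^{(m-1)}$ changes at step $m$, and its new value reads only coordinates $i_m$ and $j\in\varphi(i_m)$ of the current state. Indeed, $A_{i_m j}$ has nonzero entries only in row $i_m$, and these lie in columns $i_m$ and $j$. Two facts then ensure that these coordinates have not yet been modified within the layer, so $\bm{x}^{(m-1)}$ may be replaced by $\bm{x}^{(0)}$ in the first and third terms. First, the predecessors satisfy $\varphi(i_m)\subseteq\mathcal{T}_{l-1}^{(r_n,\theta)}$, which is disjoint from layer $l$. Second, the node $i_m$ is distinct from $i_1,\dots,i_{m-1}$. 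Formally, we use $e_{i_m}^T\bm{x}^{(m-1)}=e_{i_m}^T\bm{x}^{(0)}$ and $e_j^T\bm{x}^{(m-1)}=e_j^T\bm{x}^{(0)}$, which give $A_{i_m j}\bm{x}^{(m-1)}=A_{i_m j}\bm{x}^{(0)}$.

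With this in hand, we prove by induction the claim
\begin{equation}
\bm{x}^{(m)}=\sum_{k\le m}\frac{1}{|\varphi(i_k)|}\sum_{j\in\varphi(i_k)}A_{i_k j}\bm{x}^{(0)}+\Big(I-\sum_{k\le m}e_{i_k}e_{i_k}^T\Big)\bm{x}^{(0)}+\sum_{k\le m}C_{i_k}\bm{u}.
\end{equation}
For the inductive step we apply $B_{i_m}=I-e_{i_m}e_{i_m}^T$ to the hypothesis for $\bm{x}^{(m-1)}$. We use the identities $e_{i_m}e_{i_m}^T A_{i_k j}=0$ and $e_{i_m}e_{i_m}^T C_{i_k}=0$ for $k<m$, since these matrices are supported on row $i_k\neq i_m$. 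We also use $B_{i_m}\big(I-\sum_{k<m}e_{i_k}e_{i_k}^T\big)=I-\sum_{k\le m}e_{i_k}e_{i_k}^T$. Together these identities show that $B_{i_m}\bm{x}^{(m-1)}$ reproduces all previous terms and updates the projector. The new $A$ and $C$ terms, evaluated at $\bm{x}^{(0)}$ by the previous paragraph, then complete the formula. Setting $m=M$ gives Eq.~\eqref{eq:end_opinion_layer}.

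The main obstacle is the careful justification that intra-layer updates do not read each other's outputs. This rests on the fact that predecessors always lie strictly in layer $l-1$, so there are no intra-layer edges in the exposure tree, as assumption (A2) guarantees. Once this support argument is written out, the remaining steps are routine algebra with rank-one projectors.
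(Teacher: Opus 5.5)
Your proof is correct and follows the same route as the paper. You apply Lemma~\ref{lemma:fjc_matrix} node by node through the layer and argue that the updates do not interfere, so every term can be evaluated at $\bm{y}_{l-1}^{(n)}$. Your induction is also more careful than the paper's one-line ``disjoint coordinates, hence commute'' remark, because it makes explicit the fact that remark silently relies on: $\varphi(i_m)\subseteq\mathcal{T}_{l-1}^{(r_n,\theta)}$, so no update in layer $l$ reads a coordinate written earlier in that layer. One cosmetic slip: replacing $\bm{x}^{(m-1)}$ by $\bm{x}^{(0)}$ is needed only in the $A$-term, since $C_{i_m}\bm{u}$ does not involve the state.
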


\begin{proof}

Let $\mathcal{T}_l^{(r_n,\theta)} = \{i_1, \dots, i_F\}$ be the set of nodes in layer $l$, ordered according to the sequence in which they update. Starting from $\bm{y}_{l-1}^{(n)}$, the final state $\bm{y}_l^{(n)}$ is obtained by applying the updates of nodes $i_1, \dots, i_F$ sequentially. 
By Lemma~\ref{lemma:fjc_matrix}, each update modifies only the component associated with the updating node. Since the nodes in $\mathcal{T}_l^{(r_n,\theta)}$ are distinct, these updates act on disjoint components of the state and therefore do not interfere. This follows from the fact that, for $i \neq k$, the corresponding update operators act on disjoint coordinates and therefore commute. As a result, their contributions can be aggregated into a single expression evaluated at $\bm{y}_{l-1}^{(n)}$. 
The first term accounts for the influence of predecessors, the second preserves the components of nodes that do not update, and the third incorporates the contribution of the initial opinions. Combining these terms yields the stated expression. 
\end{proof}

\vspace{-13pt}
\subsection{Probabilistic Structure of the Cascade}
\label{sec:probabilistic_structure}

The matrix formulation in Section~\ref{sec:matrix_rep} describes the opinion update for a specific, deterministic realization of the cascade. However, since the cascade is stochastic, the set of active predecessors $\varphi(i)$ is random. To compute expectations, we characterize both the probability that a node belongs to the cascade and the probability of each predecessor-set configuration. %However, the FJC model is inherently stochastic: the actual propagation tree $\mathcal{T}^{(r,\theta)}$ is a random subgraph of the baseline reachability tree $\mathcal{O}^{(r)}$, determined by the resharing probability $\theta$. Consequently, the set of active predecessors $\varphi(i)$ influencing a node $i$ is a random variable. To compute the expected opinion trajectories, we must quantify the likelihood of these structural configurations, which is our goal for this section. 
%In this section, we derive the probability $\delta_i$ that a generic node $i$ participates in the cascade, as well as the probability distribution over the possible subsets of active predecessors that trigger an opinion update.
% In the rest of this section, we will make some simple derivations about the nodes being part of a cascade, which will depend on the resharing probability of nodes. 
%Note that, in order to find a closed form, we make a simplifying assumption that the resharing probability is uniform across all pairs, i.e. $\theta_{i}=\theta$ for all $i\in \mathcal{V}$. We will leave as future work to relax this assumption.

We denote by $\delta_i$ the probability of a node $i$ being part of the cascade $\mathcal{C}^{(r)}$. For what we have already said, $\delta_i=1$ for nodes $i$ in layers~0 and~1. To generalize this probability to the other layers, we introduce the notion of potential predecessors.

\begin{definition}
    The \emph{parents} $\phi(i)$ of a node $i\in\mathcal{O}^{(r)}_{l}$ are the neighbors of $i$ belonging to the previous layer of the deterministic reachability tree, i.e.:
    \begin{equation}
        \phi(i)=\mathcal{N}^{in}(i)\cap\mathcal{O}^{(r)}_{l-1}.
    \end{equation}
\end{definition}

\noindent We can partition $\phi(i)$ into three disjoint subsets as follows:
\begin{equation}
\phi(i)=\varphi(i)\cup\xi(i)\cup\psi(i).
\label{eq:partition_pred}
\end{equation}
where $\varphi(i)$ collects the predecessors (nodes in the cascade who reshare),  
$\xi(i)$ the nodes in the cascade but not resharing, and  
$\psi(i)$ the nodes not in the cascade. Figure~\ref{fig:predecessors} illustrates these relations.  

Let $S_i$ be a Bernoulli random variable indicating whether node $i\neq r$ reshares the post ($S_i = 1$ if reshared, $0$ otherwise). Then node $i$ belongs to the cascade if and only if at least one of its parents belongs to the cascade and reshares the post, i.e., if the set below is nonempty:
\begin{equation}
    \varphi(i)=\left\{j\in \phi(i): j\in\mathcal{T}^{(r,\theta)}_{l-1} \wedge S_{j}=1  \right\}.
    %\varphi(i)=\left\{j\in \phi(i): j\in\mathcal{T}^{(r,\theta)}_{l-1} \wedge S_{j}=1  \right\}
\end{equation}
% is nonempty.

\begin{pro}[Probability of a node being part of the cascade]\label{pro:deltai}
    Consider a node $i$ in the tree $\mathcal{O}^{(r_n)}$ and let us denote by $\delta_i$ the probability $P(\{i \in \mathcal{C}^{(r_n)}\})$ of node $i$ being part of the cascade. % and let us indicate with  $\delta_i$ the probability $P(\{i\in\mathcal{C}\})$ of node $i$ being part of the cascade and with $p=\theta$ and $q=1-\theta$, w
    We can derive $\delta_i$ with the following equation:
    \begin{equation}
        \delta_i = \left\{
            \begin{array}{ll}
            1 & \qquad\text{if $i\in \mathcal{O}^{(r_n)}_1$}\\
            1-\prod_{a\in\phi(i)}(1-\theta_a) &\qquad \text{if $i\in \mathcal{O}^{(r_n)}_2$}\\
            1-\sum\limits_{\mathcal{B}\in\mathcal{P}\left(\phi(i)\right)} 
            \Bigg(\prod\limits_{b\in\mathcal{B}}(1-\delta_b)&\prod\limits_{d\in\phi(i)\setminus\mathcal{B}}\delta_d(1-\theta_d)\Bigg)\\
            & \qquad \text{if $i\in \mathcal{O}^{(r_n)}_{l>2}$}\\
            % \vspace{1mm}
            \end{array}
        \right.
        \label{eq:delta_i}
    \end{equation}
    where $\mathcal{P}(\phi(i))$ indicates the power set of $\phi(i)$, and $\theta_a$ indicates the resharing probability for a node $a$. 
\end{pro}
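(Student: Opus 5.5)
The plan is to compute the complement $1-\delta_i$, i.e.\ the probability that $\varphi(i)=\emptyset$, and to argue layer by layer, using the characterization given just before the proposition: $i\in\mathcal{C}^{(r_n)}$ if and only if some parent $j\in\phi(i)$ belongs to the cascade and has $S_j=1$.

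\emph{Layers 1 and 2.} For $i\in\mathcal{O}^{(r_n)}_1$ the only parent is the root. The root always posts, so $\varphi(i)=\{r_n\}$ deterministically and $\delta_i=1$. For $i\in\mathcal{O}^{(r_n)}_2$ every parent $a\in\phi(i)$ lies in layer 1, so it is in the cascade with probability one. Hence $\varphi(i)=\emptyset$ exactly when $S_a=0$ for all $a\in\phi(i)$. Since the resharing indicators are independent Bernoulli($\theta_a$) variables, this gives $1-\delta_i=\prod_{a\in\phi(i)}(1-\theta_a)$.

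\emph{Layers $l>2$.} Each parent $d\in\phi(i)$ is in one of three states:
\begin{itemize}
\item not in the cascade, which happens with probability $1-\delta_d$;
\item in the cascade but not resharing, with probability $\delta_d(1-\theta_d)$, because $S_d$ is drawn independently of whether $d$ was reached;
\item in the cascade and resharing.
\end{itemize}
The event $\{\varphi(i)=\emptyset\}$ is the disjoint union, over subsets $\mathcal{B}\subseteq\phi(i)$, of the events ``the parents in $\mathcal{B}$ are out of the cascade and the parents in $\phi(i)\setminus\mathcal{B}$ are in it but silent.'' This is exactly the partition $\phi(i)=\psi(i)\cup\xi(i)$ with $\varphi(i)=\emptyset$ and $\mathcal{B}=\psi(i)$. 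Summing over $\mathcal{B}\in\mathcal{P}(\phi(i))$ and multiplying the per-parent probabilities then yields the stated formula, provided the joint probability factorizes over parents.

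The main obstacle is this factorization. It requires the events $\{d\in\mathcal{C}\}$ for the different $d\in\phi(i)$ to be independent. In general they are not, since two parents can share ancestors in layer $l-2$, so their memberships are positively correlated. I would handle it in two steps:
\begin{enumerate}
\item Check that factorization holds exactly when the parents' ancestor sets in the reachability tree are disjoint, in particular whenever $\mathcal{O}^{(r_n)}$ restricted to the ancestors of $i$ is a genuine tree. This is true in layer 2, which is why that case is exact.
\item In general, state explicitly that the product form rests on treating parent memberships as independent, a tree-like (Bethe-type) assumption consistent with (A2). Under that assumption the formula follows directly and gives a recursive computation of $\delta_i$ layer by layer from the layer-2 values.
\end{enumerate}
If full exactness is wanted, I would first try conditioning on the realization of layer $l-2$, which makes the parents' memberships conditionally determined. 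I would then note that the unconditional product only approximates the resulting expectation. I would present the proposition as exact on tree-structured reachability graphs and as the natural independence closure otherwise.
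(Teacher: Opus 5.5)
Your proposal is correct and matches the paper's proof step for step. Layer 1 is deterministic, layer 2 follows from the independent resharing of always-present parents, and for layers $l>2$ you sum over the subset $\mathcal{B}\subseteq\phi(i)$ of absent parents (the rest present but silent), with the product form resting on independence. The paper also just invokes ``the independence assumption'' at that step, so your remark is accurate and more explicit than the paper: parents that share ancestors have correlated memberships, which makes the $l>2$ formula exact for tree-like ancestry and only an approximation otherwise.
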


\begin{proof}

A node $i$ belongs to the cascade if and only if at least one of its parents both belongs to the cascade and reshares the post, i.e., if $\varphi(i) \neq \emptyset.$\\
\textsc{Layer 1.} 
Nodes in layer~1 receive the post directly from the root $r_n$. Therefore, $\phi(i) = \{r_n\}$ deterministically, and hence $\delta_i = 1$.\\
\textsc{Layer 2.} 
For nodes in layer~2, all parents belong to layer~1 and thus are always in the cascade. Therefore, the only way for $\varphi(i)$ to be empty is that none of the parents reshare the post, that is, $\xi(i)=\phi(i)$. Assuming independence, this occurs with probability
$\prod_{a \in \phi(i)} (1 - \theta_a)$,
which yields the desired expression. \\
\textsc{Layers $l >2$.} 
For deeper layers, the parents $j \in \phi(i)$ are not necessarily in the cascade. Thus, $i$ fails to join the cascade if every parent either does not belong to the cascade (nodes in $\psi(i)$) or fails to reshare (nodes in $\xi(i)$). 
To compute this probability, we partition the parent set according to the subset 
$\mathcal{B}\subseteq \phi(i)$ of parents that do not belong to the cascade. For a fixed subset $\mathcal{B}$, each parent $b\in \mathcal{B}$ is absent from the cascade with probability $1-\delta_b$, while each remaining parent $d\in \phi(i)\setminus \mathcal{B}$ belongs to the cascade but does not reshare, which occurs with probability $\delta_d(1-\theta_d)$. Under the independence assumption, the probability of this configuration is therefore
$$\prod_{b \in \mathcal{B}} (1 - \delta_b) \prod_{d \in \phi(i)\setminus \mathcal{B}} \delta_d (1 - \theta_d).$$
Summing over all such subsets yields the probability that no parent activates $i$  and subtracting from 1 gives the expression for $\delta_i$.
\end{proof}

\begin{figure}
\centering
    \includegraphics[width=0.35\textwidth]{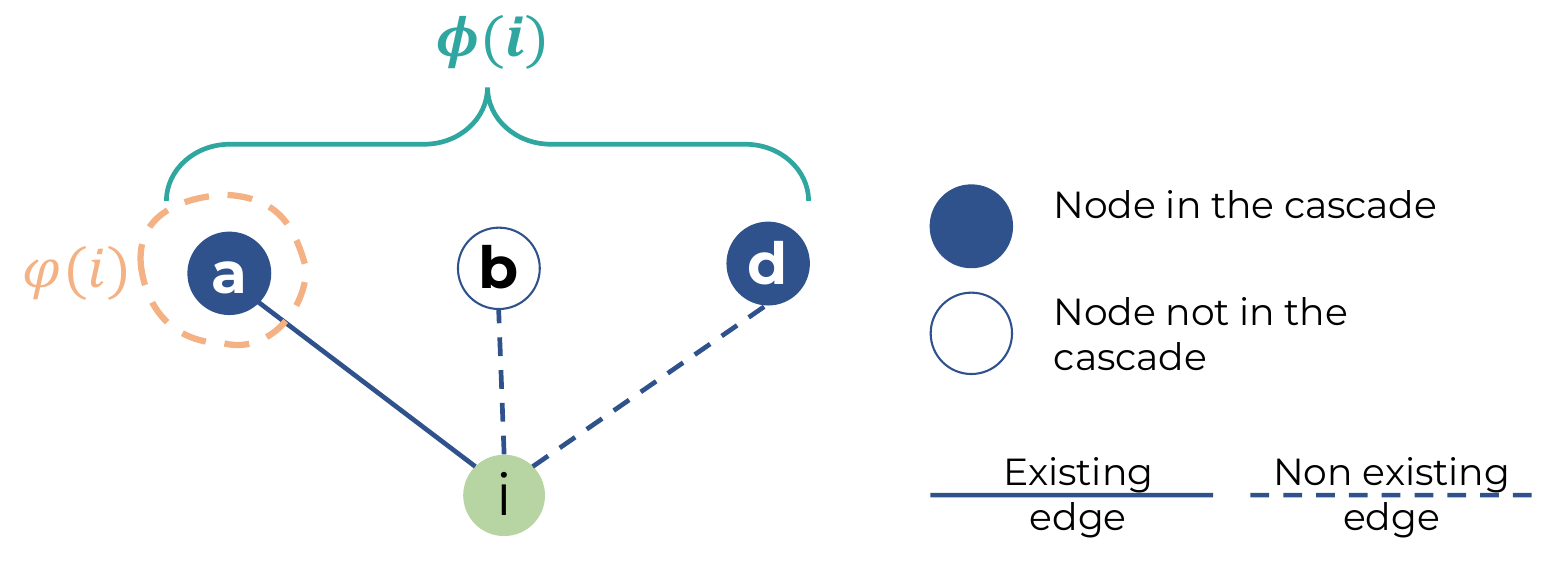}
    \vspace{-5pt}
    \caption{A node $i$ and its predecessors: $a\in\varphi(i)$ (resharing), $b\in\psi(i)$ (outside cascade), $d\in\xi(i)$ (in cascade but inactive). Only $a$ contributes to propagation.}
    \vspace{-10pt}
        \label{fig:predecessors}
\end{figure}

To compute the expected final opinion in the FJC model (which we do in Section~\ref{sec:fjc_solution}), we must first quantify the probability of \textit{which} specific subset of predecessors $\mathcal{A} \subseteq \phi(i)$ triggers an update on node $i$. The following lemma derives this probability $P(\left\{\varphi(i)=\mathcal{A}\right\})$.

\begin{lemma}[Probability of a specific predecessor set]\label{lemma:prob_a}
    Consider a node $ i \in \mathcal{O}_l^{(r_n)}$ and let $\mathcal{A}$ be a subset of $\phi(i)$ (i.e., $\mathcal{A} \in \mathcal{P}\left(\phi(i)\right)$). %Define the event $\mathsf{A}$  as ``nodes in $\mathcal{A}$ are those and only those that influence $i$'', which is given by:
    %$$\mathsf{A}=\left\{\varphi(i)=\mathcal{A}\right\}.$$
    The probability of event  $\left\{\varphi(i)=\mathcal{A}\right\}$  is given by:
    \begin{align} \vspace{-10pt}
        \beta(i,\mathcal{A}) &= \prod_{a\in\mathcal{A}}\delta_a \theta_a \cdot\nonumber\\
        &\cdot\sum_{\mathcal{B}\in\mathcal{P}(\phi(i)\setminus\mathcal{A})}\bigg[\prod_{b\in\mathcal{B}}(1-\delta_b)\cdot \prod_{d\in\phi(i)\setminus(\mathcal{A}\cup\mathcal{B})}\delta_d (1-\theta_d)\bigg]. 
        \label{eq:beta} 
    \end{align}
Note that it holds the following:
\begin{equation}
    \sum_{\mathcal{A}\in\mathcal{P}(\phi(i))\setminus\{\emptyset\}}  \beta(i,\mathcal{A}) = \delta_{i}, \vspace{-5pt}
    \label{eq:beta_c}
\end{equation}
given that all possible subsets of $\phi(i)$ but the empty set may lead to node $i$ becoming part of the cascade.
\end{lemma}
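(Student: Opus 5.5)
The plan mirrors the proof of Proposition~\ref{pro:deltai}, working under the same independence assumption used there. Each parent's status is treated as independent across parents: membership with probability $\delta$, resharing with probability $\theta$ given membership.

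\textbf{Step 1: decompose the event.} By the partition~\eqref{eq:partition_pred}, the event $\{\varphi(i)=\mathcal{A}\}$ holds exactly when every $a\in\mathcal{A}$ belongs to the cascade and reshares, while every parent in $\phi(i)\setminus\mathcal{A}$ lies in $\xi(i)\cup\psi(i)$. I split this event further according to the subset $\mathcal{B}=\psi(i)\subseteq\phi(i)\setminus\mathcal{A}$ of parents outside the cascade. The remaining parents $\phi(i)\setminus(\mathcal{A}\cup\mathcal{B})$ are then in the cascade but do not reshare. Distinct choices of $\mathcal{B}$ give disjoint events, and together they cover $\{\varphi(i)=\mathcal{A}\}$.

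\textbf{Step 2: compute each piece and sum.} Under independence, the probability of a fixed configuration factorizes as
\[
\prod_{a\in\mathcal{A}}\delta_a\theta_a\prod_{b\in\mathcal{B}}(1-\delta_b)\prod_{d\in\phi(i)\setminus(\mathcal{A}\cup\mathcal{B})}\delta_d(1-\theta_d).
\]
Summing over $\mathcal{B}\in\mathcal{P}(\phi(i)\setminus\mathcal{A})$ and pulling out the common factor $\prod_{a\in\mathcal{A}}\delta_a\theta_a$ gives~\eqref{eq:beta}. In layers~1 and~2 the parents have $\delta=1$, so every term with $\mathcal{B}\neq\emptyset$ vanishes and the formula reduces consistently to the simpler cases.

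\textbf{Step 3: prove~\eqref{eq:beta_c}.} The events $\{\varphi(i)=\mathcal{A}\}$, for $\mathcal{A}\in\mathcal{P}(\phi(i))$, partition the sample space. Hence $\sum_{\mathcal{A}}\beta(i,\mathcal{A})=1$. The term with $\mathcal{A}=\emptyset$ is exactly the non-membership probability subtracted in~\eqref{eq:delta_i}, and it equals $1-\delta_i$. Removing it leaves $\delta_i$, which is~\eqref{eq:beta_c}.

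As an algebraic cross-check, each parent contributes $\delta\theta+(1-\delta)+\delta(1-\theta)=1$, so expanding the product over parents recovers the total mass~1.

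\textbf{Main obstacle.} The delicate point is justifying the independence of parents' statuses. Parents can share ancestors in $\mathcal{O}^{(r_n)}$, so their memberships are in general correlated. I would not try to prove exact independence. Instead I would state it explicitly as the same modelling assumption already invoked in Proposition~\ref{pro:deltai}, which makes~\eqref{eq:beta} exact on tree-like reachability structures. The partition identity~\eqref{eq:beta_c} is then consistent with that proposition under this assumption.
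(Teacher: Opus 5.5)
Your proposal is correct and follows essentially the same route as the paper. You split $\{\varphi(i)=\mathcal{A}\}$ by the set $\mathcal{B}\subseteq\phi(i)\setminus\mathcal{A}$ of parents outside the cascade, factorize under the independence assumption, sum over $\mathcal{B}$, and obtain \eqref{eq:beta_c} from $i\in\mathcal{C}^{(r_n)}\iff\varphi(i)\neq\emptyset$; your complement argument via the $\mathcal{A}=\emptyset$ term uses the same fact.

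Your per-parent identity $\delta\theta+(1-\delta)+\delta(1-\theta)=1$ is a worthwhile addition. Combined with \eqref{eq:delta_i}, it shows that \eqref{eq:beta_c} holds as an exact algebraic identity between the two formulas, whether or not independence holds exactly. For layer~1 this needs the convention that the root reshares with probability one, i.e.\ $\theta_{r_n}=1$.
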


%The recursive structure of $\delta_i$ reflects the fact that cascade participation depends on the activation and resharing events of nodes in the previous layer. 
%Beyond determining whether a node participates in the cascade, we also need the probability that a specific subset of parents is responsible for activating it.

\begin{proof}
Fix a node $i$ and a subset $\mathcal{A} \subseteq \phi(i)$. We want to compute the probability that the set of active predecessors of $i$ is exactly $\mathcal{A}$, namely $\varphi(i)=\mathcal{A}$. \\
By definition, this event occurs if and only if every node $a \in \mathcal{A}$ belongs to the cascade and reshares the post, while every node in $\phi(i)\setminus \mathcal{A}$ does not belong to $\varphi(i)$. The latter can happen in two ways: either the node does not belong to the cascade, or it belongs to the cascade but does not reshare. 
To account for these possibilities, let $\mathcal{B} \subseteq \phi(i)\setminus \mathcal{A}$ denote the subset of parents that do not belong to the cascade. Then the remaining nodes in $\phi(i)\setminus (\mathcal{A}\cup \mathcal{B})$ are exactly those that belong to the cascade but do not reshare. 
Once $\mathcal{B}$ is fixed, the event $\{\varphi(i)=\mathcal{A}\}$ corresponds to the joint event
$\{\varphi(i)=\mathcal{A},\ \psi(i)=\mathcal{B},\ \xi(i)=\phi(i)\setminus (\mathcal{A}\cup \mathcal{B})\}.$
Assuming independence across nodes, its probability factorizes as
\[
\prod_{a\in \mathcal{A}}\delta_a\theta_a
\prod_{b\in \mathcal{B}}(1-\delta_b)
\prod_{d\in \phi(i)\setminus (\mathcal{A}\cup \mathcal{B})}\delta_d(1-\theta_d),
\]
where $\delta_a\theta_a$ is the probability that $a$ belongs to the cascade and reshares, $1-\delta_b$ is the probability that $b$ does not belong to the cascade, and $\delta_d(1-\theta_d)$ is the probability that $d$ belongs to the cascade but does not reshare.\\
Summing over all subsets $\mathcal{B} \subseteq \phi(i)\setminus \mathcal{A}$ yields Eq.~\eqref{eq:beta}.
Finally, summing $\beta(i,\mathcal{A})$ over all nonempty subsets $\mathcal{A}\subseteq \phi(i)$ gives $\delta_i$ (Eq.~\eqref{eq:beta_c}), since node $i$ belongs to the cascade if and only if $\varphi(i)\neq \emptyset$.
\end{proof}

\vspace{-20pt}
\subsection{Expected Opinion Dynamics}\label{sec:fjc_solution}

Having established a matrix formula for how opinions update when a specific cascade realization occurs (Section~\ref{sec:matrix_rep}, Corollary~\ref{cor:end_opinion_layer}) and a probabilistic description of which cascade realizations are likely (Section~\ref{sec:probabilistic_structure}, Proposition~\ref{pro:deltai} and Lemma~\ref{lemma:prob_a}), we now combine these to compute the expected opinion trajectory, i.e.,  the average evolution of opinions over all possible cascade realizations.
Note that the stochastic nature of the FJC model prevents us from obtaining a deterministic final opinion vector, unlike the FJ model. 
% To directly apply the results of Section~\ref{sec:modelling}, one would need the exact propagation trees $\mathcal{T}^{(r_n,\theta)}$, which are only known after the cascade has fully unfolded. 
% Instead, we focus on estimating the \emph{expected} final opinion, leveraging the probabilistic framework developed in Section~\ref{sec:probabilistic_structure}.

\subsubsection{Solution of a single cascade}
\label{sec:single_cascade}

We begin by analyzing the propagation of a single post, $p_n$, originated at node $r_n$, to estimate the expected final opinion at the end of the corresponding cascade. Let $\bm{y}^{(n)}_0$ denote the opinion vector at the beginning of the cascade, and $\bm{y}^{(n)}_l$ the opinion vector at the end of layer $l$. 

A tempting simplification would be to express the expected opinion at layer $l$, namely $E[\bm{y}_l^{(n)}]$, as a function of the expected opinions in the previous layer, i.e., $E[\bm{y}_{l-1}^{(n)}]$. %$E[\bm{y}_0^{(n)}],\dots,E[\bm{y}_{l-1}^{(n)}]$. 
However, such a closed recursion cannot be obtained exactly, due to the statistical coupling between the diffusion process and the opinion dynamics. %Consider the example in Figure~\ref{fig:example_tree}. 
Assume that we want to express the expected opinion of a node $i$, belonging to layer $l$, as a function of the expected opinion of node $j$, which belongs to layer $l-1$. Assume that the update occurs at time $t$. By the law of Total Probability, we should write:
\begin{align} \vspace{-10pt}
    E[x_i(t)]&=\sum_{\mathcal{A}\in\mathcal{P}(\{j\})}E[x_i(t)|\varphi(i)=\mathcal{A}] \, P\{\varphi(i)=\mathcal{A}\}\\
    &=E[x_i(t-1)] \, P\{\varphi(i)=\emptyset\} + \\
    &\qquad +E[x_i(t)|\varphi(i)=\{j\}] \, P\{\varphi(i)=\{j\}\},\label{eq:example}
\end{align}
where in the second equality we used that in case $\varphi(i)=\emptyset$, the opinion of node $i$ remains the same. 
A natural way to compute the second addend is to evaluate the opinion propagation of node $j$ to node $i$, using Eq.~\eqref{eq:fjc_update} after passing by expectation. However, this is not valid: the expected opinion of node $j$, $E[x_j(t-1)]$, averages over realizations in which node $j$ may not belong to the cascade, which is inconsistent with the conditioning event $\varphi(i)=\{j\}$ that holds in the second addend of Eq.~\eqref{eq:example}. This shows that the expectation in the second term cannot be expressed in terms of the unconditional expectation $E[x_j(t-1)]$. 
Hence, expected opinions across consecutive layers cannot be decoupled, and a closed recursion in unconditional expectations is not available. 
The correct formulation must therefore explicitly account for the cascade history through conditional expectations. This is formally proved in the following lemma, whose proof is in the SI.

\begin{lemma}[Exact conditional layer-wise recursion]
\label{lemma:exact_conditional_layer_recursion}
Let $\mathcal{H}_{l-1}$ denote the $\sigma$-algebra generated by all activation and resharing events in layers $1,\dots,l-1$. 
Then the expected opinion vector at the end of layer $l$ satisfies
\begin{equation*}
E[\bm y_l] \;=\; E\!\left[\,E[\bm y_l\mid \mathcal{H}_{l-1}]\,\right].
\label{eq:iterated_expectation_layer}
\end{equation*}
Moreover, conditional on $\mathcal{H}_{l-1}$ one has the exact identity
\begin{align}
E[\bm y_l\mid \mathcal{H}_{l-1}]
&=
\bm y_{l-1}
+
\sum_{i\in \mathcal{O}_l}
\sum_{\mathcal{A}\in \mathcal{P}( \phi(i))\setminus\emptyset}
\beta_{l}(i,\mathcal{A}\mid \mathcal{H}_{l-1})\cdot\;\\
&\cdot
\Bigg(
\Big(\tfrac{1}{|\mathcal{A}|}\sum_{j\in \mathcal{A}}A_{ij}+B_i-I\Big)\bm y_{l-1}
+
C_i\bm u
\Bigg)
\label{eq:exact_cond_layer}
\end{align}
where $\phi(i)$ is the (deterministic) parent set of $i$, $\varphi(i)\subseteq \phi(i)$ is the (random) set of resharing parents of $i$, and
\[
\beta_{l}(i,\mathcal{A}\mid \mathcal{H}_{l-1})
:=P(\varphi(i)=\mathcal{A}\mid \mathcal{H}_{l-1}).
\]
\end{lemma}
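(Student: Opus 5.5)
The first identity is the tower property of conditional expectation. $\bm y_l$ is a bounded random vector, since every FJC update is a convex combination of values in $[-1,1]$. The sub-$\sigma$-algebra $\mathcal{H}_{l-1}$ is generated by finitely many Bernoulli variables: membership of nodes in layers $1,\dots,l-1$ and their resharing indicators $S_j$. So the conditional expectation is well defined and $E[\bm y_l]=E[E[\bm y_l\mid\mathcal{H}_{l-1}]]$ follows immediately. All the substance is in the second identity. The plan is to start from the exact pathwise layer update of Corollary~\ref{cor:end_opinion_layer}, rewrite it as a sum over the deterministic layer $\mathcal{O}_l$ with indicator weights, and then take conditional expectations term by term.

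\textbf{Step 1: rewrite the layer update with indicators.} Using $\tilde B_{\mathcal{T}_l}=I-\sum_{k\in\mathcal{T}_l}e_ke_k^T$ and $B_i-I=-e_ie_i^T$, we have $\tilde B_{\mathcal{T}_l}=I+\sum_{i\in\mathcal{T}_l}(B_i-I)$. Substituting this into Eq.~\eqref{eq:end_opinion_layer} gives
\begin{equation*}
\bm y_l=\bm y_{l-1}+\sum_{i\in\mathcal{T}_l}\Big(\Big(\tfrac{1}{|\varphi(i)|}\sum_{j\in\varphi(i)}A_{ij}+B_i-I\Big)\bm y_{l-1}+C_i\bm u\Big).
\end{equation*}
A node $i\in\mathcal{O}_l$ is in $\mathcal{T}_l$ iff $\varphi(i)\neq\emptyset$. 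We therefore replace the sum over $\mathcal{T}_l$ by $\sum_{i\in\mathcal{O}_l}\sum_{\mathcal{A}\neq\emptyset}\mathbb{1}\{\varphi(i)=\mathcal{A}\}(\cdots)$, with $\varphi(i)$ replaced by $\mathcal{A}$ inside the bracket. This is an exact identity for every realization.

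\textbf{Step 2: condition on $\mathcal{H}_{l-1}$.} The key measurability claim is that $\bm y_{l-1}$ is $\mathcal{H}_{l-1}$-measurable. It is a deterministic function of $\bm y_0$, $\bm u$ and the realized layers $\mathcal{T}_1,\dots,\mathcal{T}_{l-1}$ with their predecessor sets, and these are determined by activation and resharing events in layers up to $l-1$. This requires the activation of layer-$(l-1)$ nodes to be determined by resharing in layer $l-2$. Since the matrices $A_{ij},B_i,C_i$ and $\bm u$ are deterministic, each summand has the form $\mathbb{1}\{\varphi(i)=\mathcal{A}\}\,M_{i,\mathcal{A}}\bm y_{l-1}+\mathbb{1}\{\varphi(i)=\mathcal{A}\}C_i\bm u$. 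Pulling out what is known gives
\begin{equation*}
E[\mathbb{1}\{\varphi(i)=\mathcal{A}\}M_{i,\mathcal{A}}\bm y_{l-1}\mid\mathcal{H}_{l-1}]=P(\varphi(i)=\mathcal{A}\mid\mathcal{H}_{l-1})\,M_{i,\mathcal{A}}\bm y_{l-1}.
\end{equation*}
Linearity over the finite sums then yields Eq.~\eqref{eq:exact_cond_layer}, with $\beta_l(i,\mathcal{A}\mid\mathcal{H}_{l-1})$ as defined in the statement.

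The main obstacle is the measurability bookkeeping in Step 2, specifically pinning down exactly what $\mathcal{H}_{l-1}$ contains. It must include the resharing indicators $S_j$ of layer-$(l-1)$ nodes. If it does, $\varphi(i)$ is itself $\mathcal{H}_{l-1}$-measurable and $\beta_l$ is an indicator, and the identity still holds trivially. If $\mathcal{H}_{l-1}$ excludes the last-layer resharing decisions, then one uses that $S_j$ is independent of the opinion trajectory and of earlier layers, so $\beta_l$ is a product of $\theta_j$ and $1-\theta_j$ over cascade members. I would state that both readings give the formula, since only $\mathcal{H}_{l-1}$-measurability of $\bm y_{l-1}$ is used. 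Finally, I would stress that $\beta_l$ and $\bm y_{l-1}$ are both random and correlated, so taking the outer expectation does not factor. This is precisely the coupling the lemma is meant to expose.
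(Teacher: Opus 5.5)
Your proposal is correct and takes essentially the same route as the paper's proof. You rewrite the pathwise update of Corollary~\ref{cor:end_opinion_layer} using $\tilde B_{\mathcal T_l}=I+\sum_{i\in\mathcal T_l}(B_i-I)$ as a sum over $\mathcal O_l$ weighted by $\mathbf 1\{\varphi(i)=\mathcal A\}$, then pull the $\mathcal H_{l-1}$-measurable $\bm y_{l-1}$ out of the conditional expectation, with the tower property giving the first identity. One point should be made explicit: $\mathcal H_{l-1}$-measurability of $\bm y_{l-1}$ (presupposed by the statement as well as your proof) requires $\bm y_0$ to be given, so for $n\ge 2$, where $\bm y_0^{(n)}$ is random, $\mathcal H_{l-1}$ must also contain $\bm y_0^{(n)}$ (independence across cascades leaves $\beta_l$ unchanged).
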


Lemma~\ref{lemma:exact_conditional_layer_recursion} shows that %the evolution of the expected opinion at layer $l$ can be written exactly in terms of the conditional expectation $E[\bm y_l \mid \mathcal H_{l-1}]$. 
%However, this representation is not closed in terms of unconditional expectations, since the conditional law of $\bm y_{l-1}$ depends on the cascade history through the random sets  $\varphi(i)$. 
%Consequently, 
an exact recursion involving only $E[\bm y_{l-1}]$ is not available. 
In order to obtain a tractable inter-layer recursion, we introduce a mean-field (decoupling) approximation (MFA) that neglects the statistical dependence between the resharing configuration $\varphi(i)$ and the opinion vector $\bm y_{l-1}$: 
\begin{equation}
\textbf{(MFA): }  E\!\left[\bm{y}_{l-1}\mid \varphi(i)=\mathcal{A}\right]
\approx 
E[\bm{y}_{l-1}],
\label{eq:mf_decoupling}
\end{equation}
for all $i\in \mathcal{O}_l$ and for all $\mathcal{A}\in\mathcal{P}(\phi(i))\setminus\emptyset$. 

% The MFA replaces the conditional expectation of the previous-layer opinion vector by its unconditional counterpart, effectively neglecting the correlation between cascade membership and opinion evolution.  
% More precisely, without approximation the unconditional expectation of a component $y_{l-1,a}$  for a node $a$, can be decomposed as
% \[
% E[y_{l-1,a}]
% =
% \delta_a\,E[y_{l-1,a}\mid a\in\mathcal C]
% +
% (1-\delta_a)\,y_{0,a},
% \]
% where $\delta_a$ is the probability that node $a$ belongs to the cascade. 
% Conditioning on $\{\varphi(i)=\mathcal A\}$ removes the second term for all $a\in\mathcal A$, since $a\in\mathcal C$ is enforced. 
% The MFA assumes that this difference is negligible at the level of expectations, thereby closing the recursion in terms of $E[y_{l-1}]$ alone. 
% From a modeling perspective, this corresponds to treating diffusion and opinion dynamics as weakly correlated at the expectation level, while still preserving their structural interaction in the update rule.

Under this approximation, the conditional recursion of Lemma~\ref{lemma:exact_conditional_layer_recursion} reduces to a closed linear recursion depending only on the marginal resharing probabilities and on $E[\bm y_{l-1}]$, as stated in the next theorem.

\begin{theorem}[Approximate closed recursion for the expected opinions at the end of each layer]
\label{theo:approx_expected_layer_recursion}

Let $E[\bm{y}^{(n)}_0]$ denote the expected opinion vector when post $p_n$ is published, i.e. the opinion vector at the beginning of the $n$-th cascade, obtained as the final state of the previous cascade (or equal to $\bm{u}$ for $n=1$). For each layer $l\geq 1$, assume the mean-field decoupling approximation (MFA) as stated in Eq.~\eqref{eq:mf_decoupling}.
Then, the expected opinion at the end of layer $l$ satisfies the approximate linear recursion
\begin{align}
E[\bm{y}_{l}^{(n)}] 
&= 
\tilde{A}_l^{(r,\theta)}\,E[\bm{y}_{l-1}^{(n)}]
+\tilde{C}_l^{(r,\theta)}\,\bm{u}
+\tilde{D}_l^{(r,\theta)}\,E[\bm{y}_0],
\label{eq:approx_layer_recursion}
\end{align}
where
\begin{align}
\tilde{A}_{l}^{(r,\theta)}
&=
\sum\limits_{i\in \mathcal{O}_l^{(r_n)}}
\sum\limits_{\mathcal{A}\in\mathcal{P}(\phi(i))\setminus\{\emptyset\}}
\frac{\beta(i,\mathcal{A})}{|\mathcal{A}|}\sum_{a\in\mathcal{A}}A_{i a}
+\\
&\qquad\qquad+
\left(I-\sum\limits_{i\in\mathcal{O}_l^{(r_n)}} e_{i}e_{i}^T\right),
\label{eq:a_tilde_new}\\
\tilde{C}_{l}^{(r,\theta)}
&=
\sum\limits_{i\in\mathcal{O}_l^{(r_n)}}\delta_{i}C_{i},
\label{eq:c_tilde_new}\\
\tilde{D}_{l}^{(r,\theta)}
&=
\sum\limits_{i\in\mathcal{O}_l^{(r_n)}}(1-\delta_{i}) e_{i}e_{i}^T .
\label{eq:d_tilde_new}
\end{align}
\end{theorem}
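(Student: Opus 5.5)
The plan is to take unconditional expectations in the exact identity of Lemma~\ref{lemma:exact_conditional_layer_recursion} and then use linearity, the tower property, and the MFA~\eqref{eq:mf_decoupling} to close the recursion.

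\textbf{Step 1: rewrite each summand.} Since $\beta_l(i,\mathcal{A}\mid\mathcal{H}_{l-1})=E[\mathbb{1}\{\varphi(i)=\mathcal{A}\}\mid\mathcal{H}_{l-1}]$ and $\bm y_{l-1}$ is $\mathcal{H}_{l-1}$-measurable, each summand has expectation
\[
E\big[\mathbb{1}\{\varphi(i)=\mathcal{A}\}\,M\bm y_{l-1}\big]=P(\varphi(i)=\mathcal{A})\,M\,E[\bm y_{l-1}\mid \varphi(i)=\mathcal{A}],
\]
where $M=\tfrac{1}{|\mathcal{A}|}\sum_{j\in\mathcal{A}}A_{ij}+B_i-I$ is deterministic. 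By Lemma~\ref{lemma:prob_a}, $P(\varphi(i)=\mathcal{A})=\beta(i,\mathcal{A})$ is the unconditional quantity. Applying the MFA replaces $E[\bm y_{l-1}\mid\varphi(i)=\mathcal{A}]$ by $E[\bm y_{l-1}]$. The $C_i\bm u$ term is deterministic, so it simply picks up the weight $\beta(i,\mathcal{A})$.

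\textbf{Step 2: collect terms.} I would use $B_i-I=-e_ie_i^T$ and Eq.~\eqref{eq:beta_c}, $\sum_{\mathcal{A}\neq\emptyset}\beta(i,\mathcal{A})=\delta_i$. This gives
\[
E[\bm y_l]=\Big(\sum_{i}\sum_{\mathcal{A}}\tfrac{\beta(i,\mathcal{A})}{|\mathcal{A}|}\sum_{a\in\mathcal{A}}A_{ia}\Big)E[\bm y_{l-1}]+\Big(I-\sum_{i\in\mathcal{O}_l}\delta_ie_ie_i^T\Big)E[\bm y_{l-1}]+\sum_i\delta_iC_i\bm u .
\]
The third term is already $\tilde C_l\bm u$. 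Next, split
\[
I-\sum_i\delta_ie_ie_i^T=\Big(I-\sum_{i\in\mathcal{O}_l}e_ie_i^T\Big)+\sum_{i\in\mathcal{O}_l}(1-\delta_i)e_ie_i^T .
\]
The first piece, together with the $A$-sum, yields $\tilde A_l$.

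\textbf{Step 3: the $\tilde D_l$ term (the delicate step).} I must show that $e_ie_i^T E[\bm y_{l-1}]=e_ie_i^T E[\bm y_0]$ for every $i\in\mathcal{O}_l$. The argument uses the layered structure of $\mathcal{O}^{(r_n)}$: the layers partition the nodes by distance from the root, so node $i$ appears in exactly one layer. By Corollary~\ref{cor:end_opinion_layer}, the updates in layers $1,\dots,l-1$ touch only coordinates of nodes in those layers. Hence the $i$-th coordinate of $\bm y_{l-1}$ equals that of $\bm y_0$ in every realization, and taking expectations gives the claim. This produces $\tilde D_l E[\bm y_0]$ and completes Eq.~\eqref{eq:approx_layer_recursion}.

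The main conceptual obstacle is Step 1, where the approximation enters. The identity is exact up to the point of writing $E[\bm y_{l-1}\mid\varphi(i)=\mathcal{A}]$, and the MFA is exactly what is needed to factor this into $\beta(i,\mathcal{A})\,E[\bm y_{l-1}]$. It must be applied separately for each pair $(i,\mathcal{A})$, so the proof should state the equality as ``$\approx$'' precisely at that point.
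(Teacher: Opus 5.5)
Your proof is correct, but it is organized differently from the paper's. The paper does not start from Lemma~\ref{lemma:exact_conditional_layer_recursion}. It takes expectations directly in the realized layer update of Corollary~\ref{cor:end_opinion_layer} and handles the three pieces separately:
\begin{itemize}
\item the influence term $\frac{1}{|\varphi(i)|}\sum_{j\in\varphi(i)}A_{ij}\bm y_{l-1}$, via total expectation over $\varphi(i)$ followed by the MFA;
\item the non-updating term $E[\tilde B_{\mathcal T_l}\bm y_{l-1}]$, exactly: it writes $\tilde B_{\mathcal T_l}$ with indicators $\mathbf 1_{\{k\in\mathcal T_l\}}$, adds and subtracts $\sum_{k\in\mathcal O_l}e_ke_k^\top E[\bm y_{l-1}]$, and conditions on $\{k\in\mathcal C\}$ versus $\{k\notin\mathcal C\}$, using only that a node outside the cascade keeps its initial value;
\item the $C_i\bm u$ term, via $P(i\in\mathcal T_l)=\delta_i$.
\end{itemize}

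You instead push the whole bracket $M\bm y_{l-1}+C_i\bm u$ through a single tower-property step. You apply the MFA to all of $M$, including the $B_i-I=-e_ie_i^\top$ part, and collapse with $\sum_{\mathcal A\neq\emptyset}\beta(i,\mathcal A)=\delta_i$. You then recover $\tilde D_l$ from the realization-wise fact that layer-$l$ coordinates are untouched before layer $l$.

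\textbf{What your route buys.} It is shorter and uniform. Your Step~3 is cleaner than the paper's add-and-subtract argument. It also makes clear that the $\tilde D_l E[\bm y_0]$ term is only a matter of presentation, since $\big(I-\sum_{i\in\mathcal O_l}\delta_ie_ie_i^\top\big)E[\bm y_{l-1}]$ already closes the recursion.

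\textbf{What the paper's route buys.} It shows explicitly that the MFA is needed only in the influence term; the $\tilde B$ and $\tilde C$ contributions are exact. In your version this is hidden, because you also invoke the MFA on $-e_ie_i^\top E[\bm y_{l-1}\mid\varphi(i)=\mathcal A]$.

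That invocation is in fact exact. By your own Step~3, the $i$-th coordinate of $\bm y_{l-1}$ equals that of $\bm y_0$, and $\bm y_0$ is fixed before the $n$-th cascade starts, independently of its resharing draws. You should say so. Then the ``$\approx$'' sits only where the parents' coordinates $j\in\mathcal A\subseteq\mathcal O_{l-1}$ enter through $A_{ij}$, which is exactly where conditioning on $\varphi(i)=\mathcal A$ forces $j\in\mathcal C$.
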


\begin{proof}
For notational simplicity, we omit the superscripts related to $r_n$, $\theta$, and $n$, which are considered fixed throughout the proof.
From Corollary~\ref{cor:end_opinion_layer}, for a fixed cascade realization it holds that
\begin{equation}
\bm{y}_l
=
\sum_{i\in \mathcal{T}_l}
\frac{1}{|\varphi(i)|}\sum_{j\in \varphi(i)}A_{ij}\,\bm{y}_{l-1}
+
\tilde{B}_{\mathcal{T}_l}\,\bm{y}_{l-1}
+
\sum_{i\in \mathcal{T}_l} C_i\,\bm{u},
\label{eq:realized_layer_update}
\end{equation}
where $\mathcal{T}_l$ is the (random) set of nodes in the cascade at layer $l$ and $
\tilde{B}_{\mathcal{T}_l}=I-\sum_{k\in \mathcal{T}_l} e_ke_k^\top$.
Taking expectations in Eq.~\eqref{eq:realized_layer_update} and using linearity yields
\begin{align}
E[\bm{y}_l]
&=
E\!\left[
\sum_{i\in \mathcal{T}_l}
\frac{1}{|\varphi(i)|}\sum_{j\in \varphi(i)}A_{ij}\,\bm{y}_{l-1}
\right]
+
E[\tilde{B}_{\mathcal{T}_l}\,\bm{y}_{l-1}]
+\\
&\qquad+
E\!\left[\sum_{i\in \mathcal{T}_l} C_i\,\bm{u}\right].
\label{eq:split_terms}
\end{align}
Now we will consider the terms in Eq.~\eqref{eq:split_terms} separately and prove the claim. \\
Consider the first that accounts for $E[\bm{y}_{l-1}]$. For each fixed $i\in \mathcal{O}_l$, the law of total expectation over the random set $\varphi(i)$ gives the identity
\begin{align}
E\!\Big[&
\frac{1}{|\varphi(i)|}\sum_{j\in \varphi(i)}A_{ij}\,\bm{y}_{l-1}
\Big]
=\\
&=
\sum_{\mathcal{A}\in\mathcal{P}(\phi(i))\setminus\{\emptyset\}}
\beta(i,\mathcal{A})\;
\frac{1}{|\mathcal{A}|}\sum_{j\in \mathcal{A}}
A_{ij}\,
E\!\left[\bm{y}_{l-1}\mid \varphi(i)=\mathcal{A}\right].\nonumber
\label{eq:total_over_varphi}
\end{align}
Applying the MFA (Eq.~\eqref{eq:mf_decoupling}) we obtain
\begin{align}
E\!\Big[&
\frac{1}{|\varphi(i)|}\sum_{j\in \varphi(i)}A_{ij}\,\bm{y}_{l-1}
\Big]
\approx\\
&\approx
\sum_{\mathcal{A}\in\mathcal{P}(\phi(i))\setminus\{\emptyset\}}
\beta(i,\mathcal{A})\;
\left(\frac{1}{|\mathcal{A}|}\sum_{j\in \mathcal{A}}A_{ij}\right)
E[\bm{y}_{l-1}].
\label{eq:influence_decoupled}
\end{align}
Summing Eq.~\eqref{eq:influence_decoupled} over $i\in \mathcal{O}_l$ provides the first term in Eq.~\eqref{eq:a_tilde_new}.

Next, consider the second term in Eq.~\eqref{eq:split_terms}. Since $\tilde{B}_{\mathcal{T}_l}=I-\sum_{k\in \mathcal{T}_l} e_ke_k^\top$, it follows that
\begin{align*}
\tilde{B}_{\mathcal{T}_l}\bm{y}_{l-1}
&=
\bm{y}_{l-1}
-
\sum_{k\in \mathcal{T}_l} e_ke_k^\top \bm{y}_{l-1}\nonumber\\
&=
\bm{y}_{l-1}
-
\sum_{k\in \mathcal{O}_l}  \mathbf 1_{\{k\in \mathcal{T}_l\}}\,e_ke_k^\top \bm{y}_{l-1},\nonumber
\end{align*}
where $\mathbf 1_{\{k\in \mathcal{T}_l\}}$ denotes the scalar indicator random variable taking value $1$ if $k\in \mathcal{T}_l$ and $0$ otherwise.
Taking expectations and using linearity gives
\begin{equation}
E\left[\tilde{B}_{\mathcal{T}_l}\bm{y}_{l-1}\right]
=
E[\bm{y}_{l-1}]
-
\sum_{k\in \mathcal{O}_l}  E\left[\mathbf 1_{\{k\in \mathcal{T}_l\}}\,e_ke_k^\top \bm{y}_{l-1}\right].
\label{eq:B_tilde_0}
\end{equation}
In order to separate the components corresponding to $\mathcal{T}_l$ from those corresponding to $\mathcal{O}_l\setminus\mathcal{T}_l$ in the second term of Eq.~\eqref{eq:B_tilde_0}, we add and subtract $\sum_{k\in \mathcal{O}_l} e_k e_k^\top E[\bm{y}_{l-1}]$, obtaining
\begin{align}
E\big[\tilde{B}_{\mathcal{T}_l}\bm{y}_{l-1}\big]
&=
\left(I-\sum_{k\in \mathcal{O}_l}e_ke_k^\top\right)E[\bm{y}_{l-1}]
+ \\
&\quad+\sum_{k\in \mathcal{O}_l}\!\Big(
e_ke_k^\top E[\bm{y}_{l-1}]
\!-\!E\!\left[\mathbf 1_{\{k\in \mathcal{T}_l\}}\,e_ke_k^\top \bm{y}_{l-1}\right]
\!\Big)
\nonumber\\
&=
\left(I-\sum_{k\in O_l}e_ke_k^\top\right)E[\bm{y}_{l-1}]\\
&\quad
+
\sum_{k\in O_l}e_ke_k^\top\Big(
E[\bm{y}_{l-1}]
-E\!\left[\mathbf 1_{\{k\in \mathcal{T}_l\}}\,\bm{y}_{l-1}\right]
\Big).
\label{eq:B_addsub}
\end{align}
Moreover, it is straightforward to observe that:
$$
E\!\left[\mathbf 1_{\{k\in \mathcal{T}_l\}}\,\bm{y}_{l-1}\right]
=
P\{k\in \mathcal{T}_l\}\,E\!\left[\bm{y}_{l-1}| k\in \mathcal{T}_l\right].
$$
With this trick, we can effectively separate the cases in which a node $k\in O_l$ is in the cascade or not and  exploit the layer-based update semantics. In fact, if $k\in O_l$ but $k\notin \mathcal{T}_l$ (i.e., $k$ does not belong to the cascade), then $k$ does not update and its component remains equal to its initial value, namely
\begin{equation}
e_ke_k^\top \bm{y}_{l-1}=e_ke_k^\top \bm{y}_0
\qquad\text{on the event } \{k\notin \mathcal{C}\}.
\label{eq:inactive_keeps_initial}
\end{equation}
Since $k\in \mathcal{O}_l$ is a node in layer $l$, the event $\{k\in \mathcal{T}_l\}$ is equivalent to $\{k\in \mathcal{C}\}$ and therefore
$$
\delta_k=P\{k\in \mathcal{C}\}=P\{k\in \mathcal{T}_l\}.
$$
Using the partition $\{k\in \mathcal{C}\}\cup\{k\notin \mathcal{C}\}$ together with Eq.~\eqref{eq:inactive_keeps_initial}, we obtain that:
\begin{align}
e_ke_k^\top E[\bm{y}_{l-1}]
&=
e_ke_k^\top\Big(\delta_k E[\bm{y}_{l-1}| k\in \mathcal{C}]+\\
&\qquad\qquad+(1-\delta_k) E[\bm{y}_{l-1}| k\notin \mathcal{C}]\Big)
\nonumber\\
&=
\delta_k\, e_ke_k^\top E[\bm{y}_{l-1}| k\in \mathcal{T}_l]
+(1-\delta_k)\,e_ke_k^\top E[\bm{y}_0].
\label{eq:kth_coord_partition}
\end{align}
Substituting this term into Eq.~\eqref{eq:B_addsub}, the terms involving
$\delta_k\,e_ke_k^\top E[\bm{y}_{l-1}| k\in \mathcal{T}_l]$
cancel with
$e_ke_k^\top E[\mathbf 1_{\{k\in \mathcal{T}_l\}}\bm{y}_{l-1}]$,
and we obtain
\begin{align}
E[\tilde{B}_{\mathcal{T}_l}\bm{y}_{l-1}]
&=
\left(I-\sum_{k\in O_l}e_ke_k^\top\right)E[\bm{y}_{l-1}]
+\\
&\qquad\qquad +
\sum_{k\in O_l}(1-\delta_k)\,e_ke_k^\top\,E[\bm{y}_0].
\end{align}
The first term above corresponds to the second addend in Eq.~\eqref{eq:a_tilde_new}, while the second term yields the contribution $\tilde{D}_l^{(r,\theta)}E[\bm{y}_0]$ in Eq.~\eqref{eq:approx_layer_recursion}.

Finally, consider the last term in Eq.~\eqref{eq:split_terms}. Since $C_i$ is deterministic,
$E\!\left[\sum_{i\in \mathcal{T}_l} C_i\,\bm{u}\right] = \sum_{i\in O_l}\delta_i\,C_i\,\bm{u}$,
which provides the term $\tilde{C}_l^{(r,\theta)}\bm{u}$ in Eq.~\eqref{eq:approx_layer_recursion}.

After considering the three contributions, we ultimately derived Eq.~\ref{eq:approx_layer_recursion}, which concludes the proof.
\end{proof}

% \vspace{-7pt}
\subsubsection{Solution of a sequence of cascades}
\label{sec:sequence_cascade}

Theorem~\ref{theo:approx_expected_layer_recursion} directly leads to the following corollary, which describes how to solve the FJC model when a total amount of $N$ posts are generated and spread across the network.

\begin{theorem}
    [Solution of the FJC model]\label{cor:fjc_series_cascades}
    Under the mean-field decoupling approximation (MFA) introduced in Theorem~\ref{theo:approx_expected_layer_recursion}, when a total of $N$ posts ($p_1, \dots, p_N$) are generated, the overall expected final opinion $E[\bm{z}^{(N)}]$ is computed by iteratively applying Theorem~\ref{theo:approx_expected_layer_recursion} for $n=1, \dots, N$. The output of one cascade serves as the initial condition for the next: $$E[\bm{y}_0^{(n)}] = \begin{cases} \bm{u} & n=1 \\ E[\bm{y}_L^{(n-1)}] & n=2,\dots,N. \end{cases}$$    
    % Let us consider the FJC model applied to the graph $\mathcal{G}$ where $N$ posts $p_1,\dots,p_{N}$ are generated by the nodes $r_1,\dots,r_{N}$. Let us denote with $\mathcal{C}^{(r_n)}$ the cascade corresponding to the post $p_n$. If $\bm{u}$ is the initial opinion vector, %initial  opinion $E[\bm{y}_{0}^{(r_i,\theta)}]$ of nodes is $\bm{u}$ if $i=1$ and is the expected final opinion of the previous cascade, i.e.:
    % %\begin{equation}
    % %    E[\bm{y}_0^{(r_i,\theta)}]= \left\{
    % %    \begin{array}{ll}
    % %        \bm{u} &  \text{ if $i=1$}\\
    % %        \\
    % %         E[\bm{y}_{T}^{(r_{i-1},\theta)}]\ & \text{otherwise}
    % %    \end{array}
    % %    \right.
    % %\end{equation}
    %  the expected final opinion $E[\bm{z}^{N}]=E[\bm{y}^{N}_L]$ at the end of the process is computed applying iteratively Theorem~\ref{theo:fjc_single_cascade} for $n=1,\dots,N$   with the following initial setting:
    % \begin{equation}
    % E[\bm{y}_0^{(n)}]= \begin{cases}
    %        \bm{u} & n=1\\
    %        E[\bm{y}_L ^{(n-1)} ] & n=2,\dots,N-1\\
    %     \end{cases}
    % \end{equation}
    Thus, the overall expected final opinion after $N$ cascades is obtained by sequentially propagating the approximate layer recursion.
\end{theorem}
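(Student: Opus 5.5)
The proof is an induction on the post index $n$, combined with an inner induction on the layer index $l$, applying Theorem~\ref{theo:approx_expected_layer_recursion} at each step.

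\textbf{Single cascade.} Fix a post $p_n$ rooted at $r_n$ and suppose $E[\bm{y}_0^{(n)}]$ is available.

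For $l=1,\dots,L$, apply Eq.~\eqref{eq:approx_layer_recursion}. The coefficient matrices $\tilde{A}_l^{(r,\theta)}$, $\tilde{C}_l^{(r,\theta)}$ and $\tilde{D}_l^{(r,\theta)}$ are deterministic. They depend only on the reachability tree $\mathcal{O}^{(r_n)}$, on $\delta_i$ from Proposition~\ref{pro:deltai}, and on $\beta(i,\mathcal{A})$ from Lemma~\ref{lemma:prob_a}. Hence each $E[\bm{y}_l^{(n)}]$ is an explicit affine function of $E[\bm{y}_{l-1}^{(n)}]$, $\bm{u}$ and $E[\bm{y}_0^{(n)}]$.

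Unrolling the recursion gives $E[\bm{y}_L^{(n)}]$ as an affine map of $E[\bm{y}_0^{(n)}]$ and $\bm{u}$, namely a product of the $\tilde{A}_l$ plus the accumulated $\tilde{C}$ and $\tilde{D}$ terms.

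Two boundary points need checking:
\begin{itemize}
\item Layers $l>\epsilon^{(r_n)}$ have $\mathcal{O}_l=\emptyset$. There $\tilde{A}_l=I$ and $\tilde{C}_l=\tilde{D}_l=0$, so these layers leave the expectation unchanged and padding the tree to $L$ layers is harmless.
\item Every node updates at most once per cascade, so $\bm{y}_L^{(n)}$ is the state at the end of cascade $n$.
\end{itemize}

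\textbf{Chaining cascades.} By assumption (A3), cascade $n$ starts only after cascade $n-1$ has ended. The opinion process therefore satisfies $\bm{y}_0^{(n)}=\bm{y}_L^{(n-1)}$ pathwise, and taking expectations gives $E[\bm{y}_0^{(n)}]=E[\bm{y}_L^{(n-1)}]$. For $n=1$ the initial state is deterministic and equal to $\bm{u}$.

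The resharing variables of distinct posts are drawn independently. So the cascade probabilities $\delta_i$ and $\beta(i,\mathcal{A})$ for post $n$ are unaffected by the history of earlier posts, and the coefficient matrices for cascade $n$ are the same ones given in Theorem~\ref{theo:approx_expected_layer_recursion}. Induction on $n$ then yields $E[\bm{z}^{(N)}]=E[\bm{y}_L^{(N)}]$.

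\textbf{Main obstacle.} The delicate step is justifying the MFA at the start of each cascade. The initial state $\bm{y}_0^{(n)}$ is now random, because it depends on earlier cascades. Conditioning on $\varphi(i)=\mathcal{A}$ within cascade $n$ does not bias the contribution of $\bm{y}_0^{(n)}$, since earlier resharing events are independent of the current ones. The only correlation comes from within-cascade updates, and that is precisely what the MFA neglects. The statement is therefore exact modulo the per-cascade MFA, with no additional approximation introduced by chaining.

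The $\tilde{D}$ term uses $E[\bm{y}_0]$ of the current cascade. I would check that it is applied with $E[\bm{y}_0^{(n)}]$ rather than $\bm{u}$, so that inactive nodes keep their pre-cascade opinion.
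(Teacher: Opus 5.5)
Your proposal is correct and follows the paper's route. The paper treats this result as an immediate consequence of Theorem~\ref{theo:approx_expected_layer_recursion}: apply the layer recursion for $l=1,\dots,L$ within each cascade, then chain cascades through $E[\bm{y}_0^{(n)}]=E[\bm{y}_L^{(n-1)}]$ with $E[\bm{y}_0^{(1)}]=\bm{u}$. Your extra checks make explicit points the paper leaves implicit: padded empty layers act as the identity, and the resharing of post $n$ is independent of the random start state $\bm{y}_0^{(n)}$, so neither the conditioning nor the $\tilde{D}_l$ term (evaluated at $E[\bm{y}_0^{(n)}]$, not $\bm{u}$) is biased.
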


\vspace{-10pt}
\subsection{Scalable Approximations}
\label{sec:approximations}

The theoretical framework developed in this work provides an accurate characterization of diffusion and opinion propagation in complex networks. Its precision, however, comes at a non-negligible computational cost, as it requires evaluating all possible diffusion paths, each corresponding to a potential realization of the stochastic process. %While the combinatorial nature increases computational effort, it remains feasible for networks of up to a few hundred nodes. 
This computation is performed once for fixed parameters $r$, $\Lambda$, and $\theta$, allowing the resulting matrices to be stored and reused for rapid evaluation of final opinions from initial conditions, making the framework suitable for extensive  \emph{what-if} analyses.

To enhance scalability to larger systems, we replace the full cascade structure with a tree-based approximation, illustrated in Figure~\ref{fig:example_approx} on the cascade structure of Figure~\ref{fig:example}.(b). In the original network, a node may be reachable from the cascade root through multiple parents and multiple diffusion paths, which induces the combinatorial enumeration of predecessor sets. The approximation retains, for each reachable node, only the predecessor lying on the most probable path from 
the root. As a result, each node has at most one parent in the approximated  cascade structure, so its predecessor set is either empty or a singleton.
%To enhance scalability to larger systems, we introduce approximation strategies that preserve the key probabilistic and structural properties of the diffusion process while reducing its combinatorial complexity. The core idea is to approximate the FJC diffusion dynamics with a single, most probable propagation tree, in which each node is connected to the root through a unique path corresponding to the most likely influence trajectory in the original network. Consequently, for any node $i$ its predecessor set $\varphi(i)$ is either empty or contains a single element. %The symbol $\hat{}$ indicates that here we are considering the propagation process over an approximated graph and note on the full tree $\mathcal{T}^{(r,\theta)}$.

We now describe the construction of this most probable influence tree, for the graph $G$, rooted at a specific source node $r$. Our main idea is to adapt Dijkstra’s algorithm to compute the shortest weighted paths from the root to all other reachable nodes. To this aim, we assigned probability $1$ to all edges $(r,i)$, which represent the propagation from the root to its neighbors, consistent with the FJC model. For each edge $(i,j)\in \mathcal{E}$ of the graph, we associate a propagation probability $\theta_i$, which reflects the cascade propagation process in the FJC model from node $i$ to $j$. Therefore, the edge weights are thus defined as $-\log(\theta_i)$.  
For each node $i$, we denote by $d_i$  the distance from $r$ to $i$ and with $\sigma(r,i)=(v_0,\dots,v_{d_i})$ the unique shortest path from $r$ to $i$, where $v_0=r$ and $v_{d_i}=i$. The reachability probability $\rho_i$ of $i$ is given by $\rho_i=\prod_{k=1}^{d_{i}-1}\theta_k$. %, where $d_i$ is the shortest-path distance from $r$ to $i$. 
%Immediate neighbors of the root are included by design with a reachability probability of 1, consistent with the FJC model (Remark~\ref{rem:layers12}). 

For all  nodes, inclusion in the tree depends on one of two variants of the method described below:
 
\begin{description}
    \item[Full probability tree (FT):] includes all nodes reachable from $r$, irrespective of their reachability probability;%the full probability tree, which includes all reachable nodes regardless of reachability, 
    \item[Thresholded tree (TT):] retains only nodes whose reachability probability exceeds a user-defined threshold $\tau$, that is, nodes satisfying $\rho_i=\prod_{k=1}^{d_{i}-1}\theta_k \geq \tau$.%, where $d_i$ is their shortest-path distance from $r$.%which retains only nodes whose reachability exceeds a user-defined threshold $\tau$ ($\theta^{d_i-1} \geq \tau$).
\end{description}
It is important to note that in both cases, the immediate neighbors of the root are included by design, as their reachability probability $\rho_i =1$ for all neighbors $i$. This is consistent with the FJC model. In both cases, the resulting structure is a directed acyclic tree that encodes the most probable parent-child relations between nodes, together with their reachability probabilities derived from their shortest-path distances.

The approximated trees, FT and TT, lead to lower computational costs when calculating results. This reduction primarily arises from the fact that a node $i$ has at most one possible predecessor ($|\overset{FT}{\varphi}(i)|,|\overset{TT}{\varphi}(i)|\leq 1$), which simplifies the combinatorial calculations. The main results presented in Section~\ref{sec:fjc_solution} easily carry over to the approximated trees setting. For lack of space, please refer to SI for the details.
%In the following, we highlight how the main results presented in Section~\ref{sec:fjc_solution}, map into the approximated trees setting.

The complexity of the methods are summarized in Table~\ref{tab:complexity_comparison}. In both variants, we can see that the proposed approximation represents a drastic improvement over the exponential number of combinations required by the exact method, enabling the analysis of networks orders of magnitude larger. 
For a detailed derivation of these values, please refer to SI. 
%Constructing the most probable influence tree requires:
%(i) computing shortest paths from $r$ to all other nodes via Dijkstra’s algorithm with non-negative weights, whose time complexity is $O(E \log V)$ when using a binary heap, and
%(ii) extracting the tree structure, which is linear in the number of edges retained.
%Thus, the overall cost scales as $O(E \log V)$ for the Full Tree (FT) construction. For the Thresholded Tree (TT), the complexity reduces as a function of how aggressively the threshold $\tau$ prunes low-probability branches. In sparse networks, where each node has a bounded number of neighbors, and when the threshold $\tau$ prunes most low-probability branches, the number of edges explored scales linearly with the number of surviving nodes ($V_\tau$), making the runtime approximately $O(V_\tau)$ in practice. 
%In both variants, this represents a drastic improvement over the exponential number of combinations required by the exact method, enabling the analysis of networks orders of magnitude larger.

\begin{table*}[t!]
\centering
\caption{Comparison of computational strategies}
\label{tab:complexity_comparison}
\begin{tabular}{lccc}
\toprule
\textbf{Method} & \textbf{Complexity} & \textbf{Scalability} & \textbf{Accuracy}\\

MFA-based solution & Exponential in $V$ & Limited to small/medium graphs & High (mean-field approx.) \\
Full Tree (FT) & $O(E \log V)$ & Scales to large graphs & High (minor structural approx.) \\
Thresholded Tree (TT) & Up to $O(E \log V)$
$\rightarrow$ $\approx O(V)$ with aggressive pruning  &
Scales to very large graphs & Tunable vs.~threshold $\tau$ \\
\bottomrule
\end{tabular}
\vspace{-10pt}
\end{table*}

\begin{figure}[t]
    \centering
      \subfloat[FT approximation]{
          \includegraphics[width=0.2\textwidth]{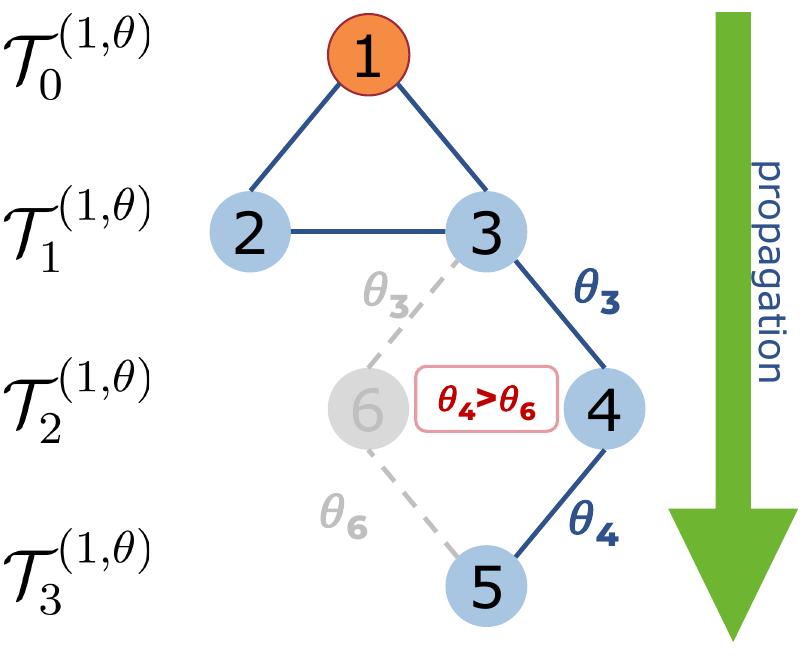}
      }\hfill
      \subfloat[TT approximation]{
          \includegraphics[width=0.2\textwidth]{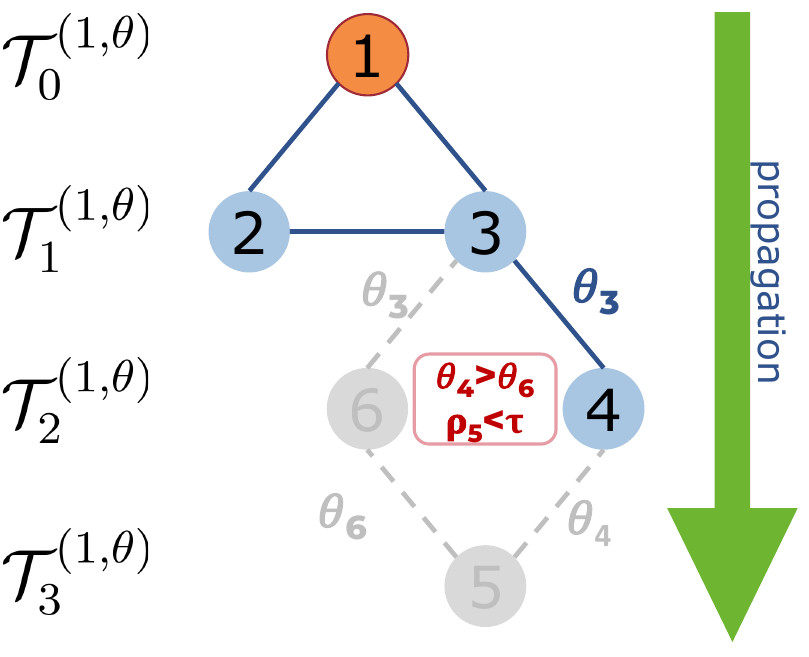}
      }
      \caption{Tree-based approximations applied to the cascade structure of Figure~\ref{fig:example}.(b).  (a) FT keeps the most probable predecessor for each node; since $\theta_4>\theta_6$, node~5 is attached to node~4. 
      (b) TT additionally prunes nodes below the reachability threshold; since $\rho_5=\theta_3\theta_4<\tau$, node~5 is removed.}
    \label{fig:example_approx}
    \vspace{-10pt}
\end{figure}

\vspace{-10pt}
\section{Performance Evaluation}
\label{sec:performance_evaluation}

We now evaluate the accuracy and implications of the theoretical framework. The analysis pursues two goals: validating the numerical and tree-based approximations, and examining how cascades of information alter the opinion dynamics and, consequently, influence polarization. 
% Compared to our previous work~\cite{biondi2026cascade}, here we focus on a more complex network structure, specifically a two-block SBM, which enables us to investigate how different communities can condition one another. 
We consider a two-block Stochastic Block Model (SBM) with 100 nodes, within- and between-community connection probabilities $p_{in}=0.3$ and $p_{out}=0.01$ (Figure~\ref{fig:graph}). The social matrix $\hat{W}$ coincides with the adjacency matrix of the network. As described in Section~\ref{sec:social_network}, the influence matrix is derived by row-normalizing $\hat{W}$: $w_{ij} = \frac{\hat{w}{ij}}{\sum_k \hat{w}{ik}}$. This topology offers a demanding benchmark: approximation errors accumulate over multiple cascade layers, and the strong modularity ($p_{out}\ll p_{in}$) yields longer paths than in Erdos–Renyi or scale-free networks (see SI for a detailed discussion). At the same time, the SBM structure provides a clear community organization that makes it easier to interpret a wide range of initial opinion configurations, including but not limited to homophilic ones, thereby enabling a controlled “what-if” analysis of how cross-community alignments shape the amplification or attenuation of influence through cascades.%At the same time, the SBM structure allows us to impose a wide range of initial opinion configurations—including but not limited to homophilic ones—thereby enabling a controlled “what-if’’ analysis of how different cross-community alignments shape the amplification or attenuation of influence through cascades. %At the same time, the SBM allows explicitly homophilic initial conditions and thereby enables a controlled “what-if’’ analysis of how cascades amplify or mitigate cross-community influence.

\begin{figure}[t!]
     \centering
     \includegraphics[width=0.45\textwidth]{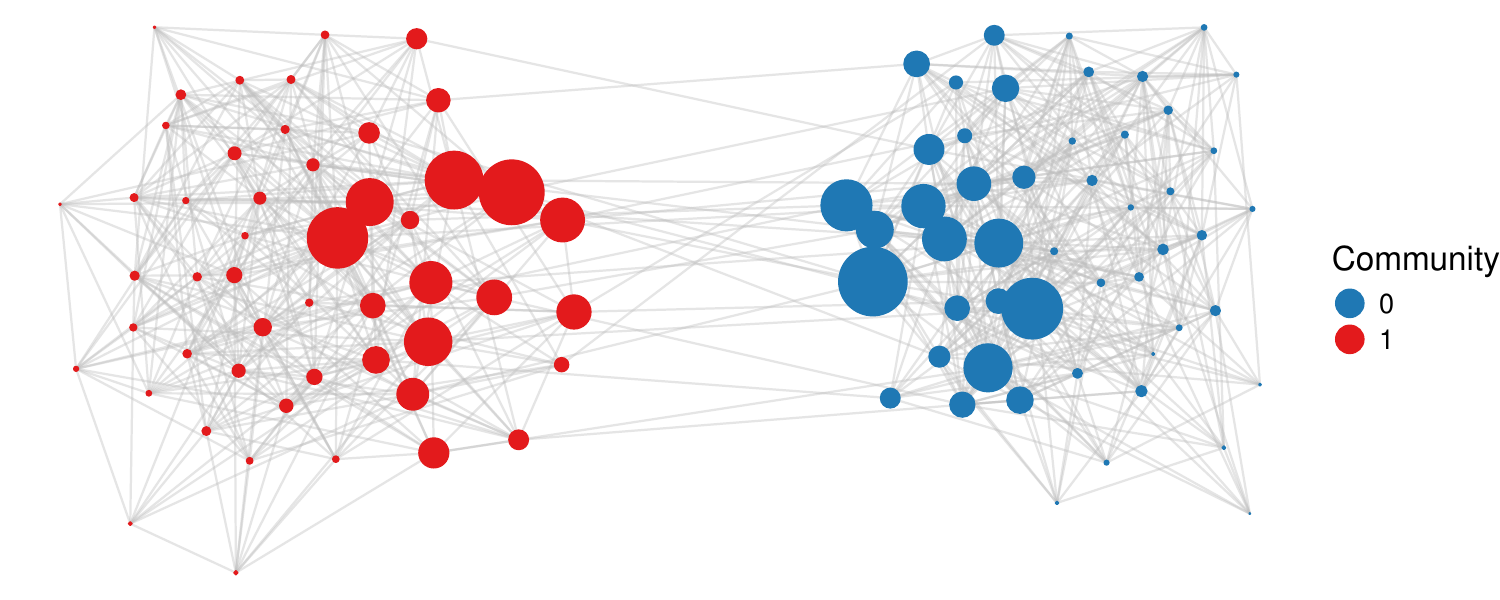}
    \caption{SBM used in the performance evaluation. Node size is proportional to the betweenness centrality.}
    \label{fig:graph}
    \vspace{-5pt}
\end{figure}

%

% \begin{table}
% \centering \vspace{-7pt}
% \caption{Statistics of the SBM network}\label{tab:centralities}
% \vspace{-5pt}
% \setlength{\tabcolsep}{2pt}
% \begin{tabular}{ ccccccc }
% \toprule
%  & \#nodes & \#bridge nodes &  \multicolumn{2}{c}{Betweenness}&  \multicolumn{2}{c}{Degree}\\
%  \cmidrule(l){4-5} 
%  \cmidrule(l){6-7} 
%  &&&Mean   & Std  &  Mean   & Std \\
%  \midrule
% \textbf{Comm. 0} & 50    & 22   & 0.0128  & 0.0120  & 15.2 & 3.48\\
% \textbf{Comm. 1} & 50  &  19  & 0.0130  & 0.0117  & 14.1 & 2.98\\
% \bottomrule
% \end{tabular} \vspace{-10pt}
% \end{table}

% \begin{figure}[t!]
%      \centering
%      \includegraphics[width=0.45\textwidth]{figures/centAll_SBM_2communities.pdf} \vspace{-7pt}
%     \caption{Betweenness and degree centrality in the different communities: (A) distribution pattern, (B) scatterplot visualization}
%     \vspace{-12pt}
%     \label{fig:centralities}
% \end{figure}

We consider resharing probabilities uniform across all nodes $\theta_i=\theta$ for all $i$, with $\theta\in{0.1,0.5,1}$ and we randomly select $15$ cascade sources. To study both coupled and decoupled susceptibility-centrality relationships, we set the susceptibility values for the nodes in three different ways: either directly or indirectly proportional to betweenness centrality $\mathrm{B}_i$, or equal to \(0.6\) for all nodes. For each setting, FJC final opinions refer to the expected opinions after the full sequence of selected cascade sources, with cascades applied sequentially as in Theorem~\ref{cor:fjc_series_cascades}.
%For each setting, we compute the theoretical outcomes of Theorem~\ref{cor:fjc_series_cascades} under MFA. %of Eq.~\eqref{eq:weaker_app}. 
The only demanding step is the computation of $\beta(i,\mathcal{A})$ for $i$ in deeper cascade layers; these values are precomputed and reused. Please note that we have omitted results for $\theta=1$ due to their qualitative similarity with the other cases. Interested readers are referred to the SI for the complete set of results.
 
Our primary objective is to analyze how the cascade process shapes opinion dynamics and homophily. Specifically, we assess whether initial opinion configurations evolve into more or less polarized final states when the opinion dynamics are coupled with a cascade process, compared to when they are not. Here, we interpret \emph{polarization} as the presence of distinct, well-separated communities holding opposing opinions.  To this end, we define four initial opinion settings. In \textsc{polarized communities} (\textsc{PC}), the two communities start with opposite opinions ($+1$ and $-1$), representing a fully polarized state. In \textsc{neutral bridges} (\textsc{NB}), community nodes again hold opposite opinions, but bridge nodes (connecting members of both communities) are assigned neutral values ($0$). In \textsc{polarized bridges} (\textsc{PB}), the situation is reversed: bridge nodes hold extreme opinions ($+1$ or $-1$) based on their community affiliation, while all other nodes remain neutral. Finally, in \textsc{asymmetric polarization} (\textsc{AP}), one community is fully neutral ($0$) and the other fully extreme ($+1$), representing an unbalanced opinion landscape.  

The interested reader can find additional information on the SBM network, along with a visualization of the initial opinion settings in the SI.

% \begin{figure}[t!]
%      \centering
%      \includegraphics[width=0.45\textwidth]{figures/in_op_SBM_2communities.pdf}
%     \caption{The four initial opinion vectors across different communities}
%     \vspace{-10pt}
%     \label{fig:initial_opinions}
% \end{figure}
% \vspace{-5pt}
%
%We chose the vectors of initial opinions in two ways. Since our main goal is to analyze the impact of the cascade on the polarization, we first chose the initial opinions as the polarizing vectors described in Section~\ref{sec: polarization}. Additionally, we also tested three other initial opinion vectors: $\bm{u}_{unif}$ with entries randomly drawn from a uniform distribution in $[-1,1]$, $\bm{u}_{\sim \mathrm{P}}$ with entries proportional to the Pagerank centrality of nodes, $\bm{u}_{\sim \mathrm{P}^{-1}}$ with entries inversely proportional to the Pagerank centrality of nodes.

\vspace{-5pt}
\subsection{Validation of the approximations}

For each condition, we ran $1{,}000$ simulations of the full sequence of selected cascade sources. For the simulation baseline, we computed each distance independently for every Monte Carlo run and report the mean distance across runs, with 95\% Monte Carlo confidence intervals. We considered the exact diffusion paths as well as the FT and TT approximations (Section~\ref{sec:approximations} where we set threshold $\tau = 0.05$), and compared them with the analytical predictions obtained from Theorem~\ref{cor:fjc_series_cascades}, thus assessing the accuracy of the mean-field approximation and the associated scalable reductions. To quantify distributional differences, we computed three popular distances~\cite{levina2001earth,lin1991divergence}, normalized in $[0,1]$. 
Total Variation captures the overall probability mass mismatch, strongly penalizing discrepancies in peak heights or missing peaks, while ignoring the geometric distance between opinions. Jensen--Shannon measures differences in distributional shape and is sensitive to changes in the number and relative weight of peaks, but does not account for the spatial displacement of probability mass. Wasserstein--1 reflects the minimal cost of transporting probability mass between distributions, capturing shifts in peak locations without over-penalizing small misalignments.

Figure~\ref{fig:validation} shows that the simulated distributions ($sim$ on the x-axis) closely match the theoretical results, with all distances remaining below $0.05$. The Monte Carlo confidence intervals are very narrow, indicating that the observed discrepancies are not driven by simulation variability.  For the FT and TT approximations, TV and JS distances (green and pink bars) can attain substantially larger values, whereas $W_1$ (blue bars) remains below $0.05$ in all cases. This indicates that the overall trend of the distributions is well captured, even when the approximations occasionally mispredict the precise shape of the distribution. A detailed visualization of the full distributions is provided in the SI. FT and TT exhibit very similar accuracy, indicating that pruning low-probability branches has only a limited effect under the considered parameter settings. %Notably, TT occasionally outperforms FT; this occurs because in TT all nodes must be included in the tree, which can lead to an overestimation of opinion shifts for leaf nodes. 
These results empirically validate the analytical recursion of Theorem~\ref{cor:fjc_series_cascades}. 
The small discrepancies observed between simulations and analytical predictions demonstrate that the mean-field decoupling approximation captures the dominant statistical effects of the diffusion–opinion coupling at the expectation level. 
Thus, the MFA provides an accurate and computationally tractable surrogate of the exact stochastic dynamics.

%For each condition, we run 1000 simulations and compare the empirical CDF (Figure~\ref{fig:validation}) of final opinions considering the exact diffusion paths or their approximations FT/TT (Section~\ref{sec:approximations}).  %For each condition, we run 1000 simulations and compare the empirical CDF of final opinions with the theoretical predictions and the FT/TT approximations (Figure~\ref{fig:validation}). 
%
% The close correspondence between theoretical predictions and simulation results across the tested values of $\theta$ provides strong evidence for the robustness and accuracy of the theoretical model. Although both FT and TT approximations deviate from the theoretical curve, they still reproduce the general trend of opinion formation. As anticipated, TT, which relies on graph truncation, yields the least accurate approximation.

 \begin{figure}[t]%
\centering
\subfloat[$\theta = 0.1$]{\includegraphics[width=0.5\textwidth]{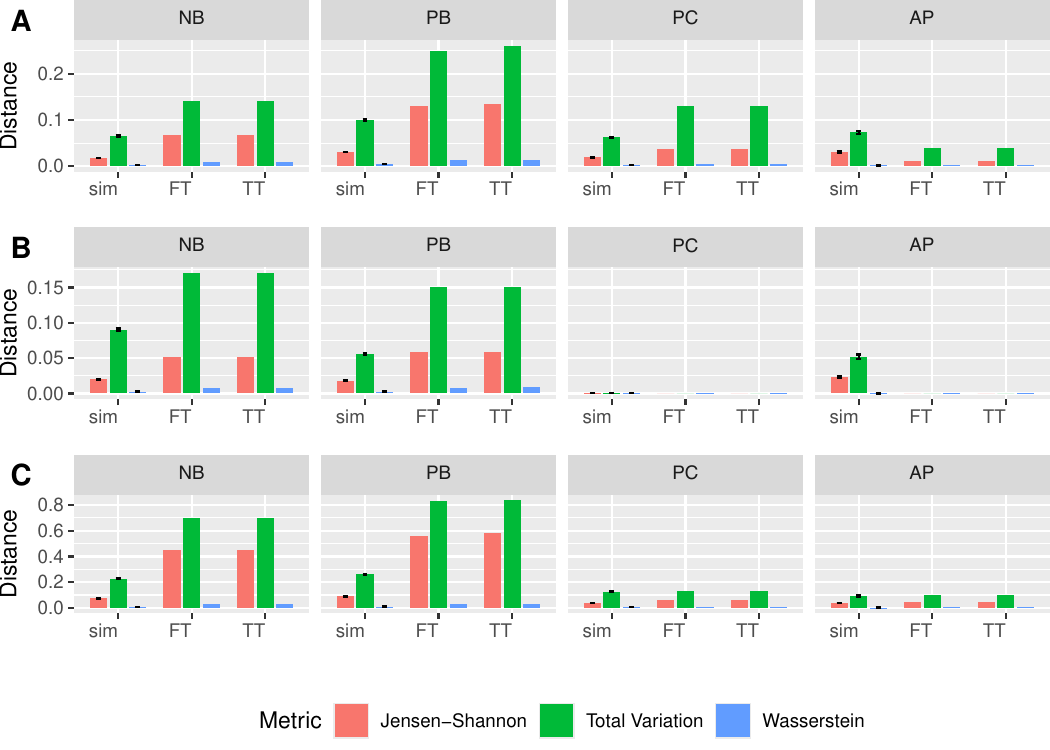}\label{fig:validation_thetaLow}}%
\hfill
\subfloat[$\theta = 0.5$]{\includegraphics[width=0.5\textwidth]{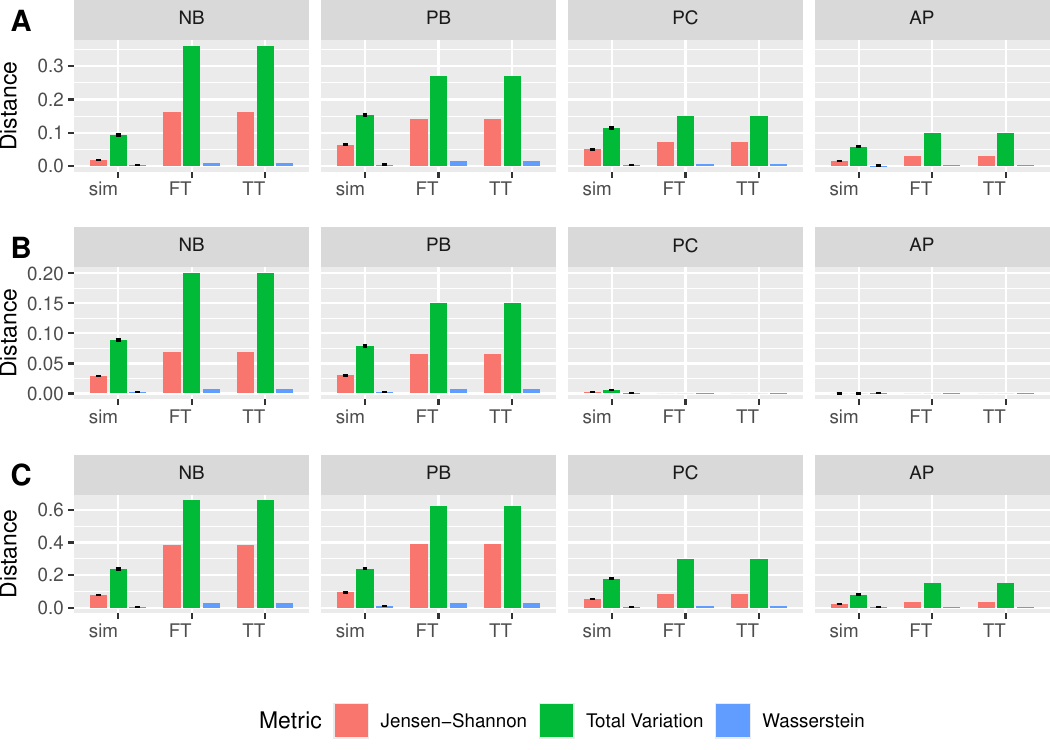}\label{fig:validation_thetaMean}}%
% \hfill
% \subfloat[$\theta = 1$]{\includegraphics[width=0.8\textwidth]{figures/validation_ThetaHighBarabasi_k6_100.pdf}\label{fig:validation_thetaHigh}}%
\caption{
Normalized distributional distances ($[0,1]$) between analytical predictions and simulation/approximation outputs. 
Simulation bars show mean distances over $1{,}000$ Monte Carlo runs with 95\% confidence intervals; FT and TT are deterministic.  (A) $\lambda_i \propto \mathrm{B}_i$; (B) $\lambda_i \propto \mathrm{B}_i^{-1}$; (C) $\lambda_i = 0.6$.
}%\caption{CDF of the final opinions computed numerically using Corollary~\ref{cor:fjc_series_cascades}, and using the approximations (FT and TT) described in Section~\ref{sec:approximations} vs the simulated ones. In the latter case, we plotted the CDF of each node's mean final opinion and shaded the confidence interval obtained with the standard errors. In the different panels, the results for the different choices of susceptibilities: in A they are  directly proportional to $\mathrm{B}_i$, in B they are inversely proportional to $\mathrm{B}_i$ and in C they are equal to $0.6$}
\label{fig:validation}%
\end{figure}
\vspace{-5pt}

\vspace{-3pt}
\subsection{Opinion dynamics and polarization}
%In this section we analyze how opinions evolve in the FJC compared with the classical FJ: thus, we will be able to derive conclusions on the effect of cascade flows over the polarization dynamics. Figures~\ref{fig:results_01} and~\ref{fig:results_05} (corresponding to low and medium diffusion) illustrate the opinion trajectories for each setting of initial opinion: each arrow represents a node, starting from its initial opinion (prejudice) and ending at its final opinion under the FJC model, while black dots indicate the final FJ opinions. %Due to space constraints, we omit $\theta=1$, as the qualitative behavior is analogous. 
%Unless otherwise specified, the following discussion applies to both $\theta$ values, since their main effect lies in modulating the rate rather than the direction of opinion change.  
We now examine how cascade-driven exposure reshapes opinion dynamics relative to the classical FJ model. Rather than analyzing each configuration in isolation, we interpret the results through three recurring mechanisms that differentiate FJC from FJ:
\begin{description}
    \item[Cascade selectivity:] influence propagates only along realized cascade paths, limiting exposure compared to the global averaging of FJ.
    \item[Bridge amplification or suppression:] nodes with high betweenness centrality disproportionately shape FJC dynamics, since cascades amplify their exposure patterns.
    \item[Susceptibility–structure interaction:] the qualitative effect of cascades depends on whether structurally central nodes are more or less susceptible to influence.
\end{description}

These mechanisms arise from a fundamental structural difference between the two models. In the FJ model, influence propagates simultaneously along all social ties, producing symmetric averaging across the network. In contrast, the FJC model transmits influence only along realized cascade paths. As a consequence, interactions between communities occur only when cascades traverse bridge nodes, and such interactions are typically asymmetric, originating from one community and reaching only a subset of the other. 
Figures~\ref{fig:results_01}  and~\ref{fig:results_05}  illustrate these mechanisms under low and moderate diffusion ($\theta = 0.1$ and $\theta = 0.5$).
Each arrow represents a node’s trajectory from its initial opinion to its final opinion under FJC, while black dots denote the corresponding FJ equilibria. Unless otherwise specified, differences across $\theta$ primarily affect the magnitude of shifts rather than their direction.

We now analyze how these mechanisms manifest across the four opinion configurations considered in our experiments.
We begin with the \textsc{pc} configuration. When the two communities start with opposite extreme opinions, the classical FJ model reduces polarization through symmetric averaging: both groups move toward the center and the distance between communities decreases, although distinct clusters may still persist depending on the susceptibility pattern. 
In contrast, FJC dynamics preserve separation between communities. Cascade selectivity limits cross-community exposure and concentrates influence on bridge nodes. Peripheral nodes, deeply embedded within their own community, remain close to their initial positions, while bridge nodes shift most. As a result, inter-community separation persists and complete homogenization does not occur. 
When susceptibility is proportional to centrality (Panel~A), this effect intensifies, as bridge nodes become highly influential and sustain polarization. When susceptibility is inversely proportional to centrality (Panel~B), bridges become stubborn, further reducing cross-community mixing.

We next consider the \textsc{ap} configuration, where one community is neutral and the other extreme. In this regime the two models behave differently but with smaller overall shifts. Under FJ dynamics, the extreme community is partially moderated by the neutral one, producing partial convergence. 
Under FJC, cascade selectivity further weakens cross-community influence, so opinion shifts remain limited and the two communities largely preserve their initial separation. This configuration therefore illustrates how cascade-mediated interactions dampen cross-group influence when the initial opinion distance is moderate.

%The behavior observed in the \textsc{ap} configuration is analogous, but restricted to the opinion domain $[0,1]$. Opinion changes follow the same direction as in the \textsc{pc} case, yet occur at a substantially slower rate. This is primarily due to the reduced opinion distances, which directly scale the magnitude of the interaction forces. In contrast to the \textsc{pc} setting, where communities initially occupy the extreme values $-1$ and $1$, the smaller initial separation in this case results in limited opinion shifts. Consequently, communities exhibit near-stationary behavior. Although the governing dynamics are unchanged, the reduced interaction strength yields an effectively distinct outcome.

In the \textsc{nb} configuration, bridge nodes are initially neutral and connect two extreme communities. Under FJ dynamics, this neutral group acts as a mediator, pulling both communities toward moderation and producing three internally coherent clusters with reduced polarization. 
Under FJC, this mediating role weakens. Because exposure is restricted to realized cascades, neutral bridges do not transmit influence uniformly across the two groups. Peripheral nodes tend to shift more than bridges, and the final separation between communities remains wider than under FJ. When bridge nodes are stubborn (Panel~B), partial blending occurs; when they are highly susceptible (Panel~A), polarization is maintained.

%We next consider the \textsc{nb} configuration, where the two communities begin with opposite opinions and are connected only through a set of neutral bridge nodes. In the FJ model, opinions within each of the three groups—extreme left, extreme right, and neutral—converge internally: the neutral nodes remain near zero, while the two extreme groups move substantially toward neutrality, yielding a relatively narrow final gap. In contrast, the FJC model does not produce such group-level consensus. Notably, the neutral nodes remain close to their initial position, whereas the extreme communities stay significantly farther from neutrality than in the FJ case, resulting in a markedly wider final separation. Please note that, in opposite to \textsc{pc}, opinion shifts occur primarily among peripheral nodes, which are influenced by their direct connections with neutrality.
%
%When susceptibility is directly proportional to centrality, making peripheral nodes more resistant, both models maintain three distinct clusters of opinions (neutral, positive, and negative). Conversely, when susceptibility is inversely proportional to centrality and bridge nodes are more stubborn, FJ shows overlapping clusters partially overlap and more moderate opinions, thereby weakening polarization, as neutral nodes exert a softening influence on the most extreme members of each community. In contrast, in FJC the cascades reduce this influence and maintain communities separated.  

Finally, we analyze the \textsc{pb} configuration, where bridge nodes hold extreme opinions and the remaining nodes are neutral. The outcome depends strongly on the susceptibility of the bridge nodes. Under the classical FJ dynamics, highly susceptible bridges (Panel~A) have their extreme prejudices mitigated through averaging, leading to a weakly segregated or mixed state. If bridges are stubborn (Panel~B), their extreme opinions persist and communities remain separated, while uniform susceptibility (Panel~C) produces an intermediate regime with multiple clusters but reduced distances. 
Under FJC, cascade selectivity again modifies the outcome: influence propagates only along realized cascade paths, so bridges do not necessarily mediate interactions symmetrically across communities. For moderate diffusion intensity ($\theta=0.5$), cascades reaching peripheral nodes may induce partial blending and a more continuous opinion spectrum; when cascade paths are sparser or bridges are effectively stubborn, separation is preserved.

Across all configurations, the key distinction is that FJ enforces global averaging across all edges, whereas FJC restricts influence to cascade-realized exposure paths. This restriction can preserve polarization when cross-community cascades are rare or bridges are stubborn, or produce asymmetric blending when cascades traverse highly susceptible bridges. Thus, cascade-mediated exposure does not simply accelerate opinion change: it qualitatively reshapes which interactions occur and which structural nodes dominate influence.

%Finally, when bridge nodes hold extreme opinions while the remaining nodes start neutral, i.e. \textsc{pb}, the FJ and FJC models display markedly different depolarization behaviors. With equal susceptibility, the FJ model yields a symmetric contraction of the three groups, whereas the FJC model produces partial mixing: extreme bridge nodes are drawn toward moderation, while neutral peripheral nodes shift slightly toward the extremes. For higher values of $\theta$ (Figure~\ref{fig:results_05}), this effect intensifies, with neutral nodes polarizing around $\pm 0.25$, resulting in a  continuous spectrum of opinions. This indicates that when polarization comes from bridge nodes, cascades can promote cross-community blending and  depolarization, depending on the resharing attitude. When bridge nodes are instead stubborn by design (susceptibility inversely proportional to centrality), this blending effect vanishes, and communities remain distinct. Conversely, when bridge nodes are more open-minded (susceptibility directly proportional to centrality), the FJ model generates substantial cross-community mixing, effectively erasing the initial group structure, while the FJC model maintains clearer community boundaries, though with reduced separation.

\vspace{-10pt}
\section{Conclusions}

This paper provided an analytical characterization of the Friedkin--Johnsen on Cascades (FJC) model, explicitly linking stochastic information diffusion with opinion dynamics. We showed that cascade-driven exposure induces a statistical coupling between diffusion realizations and opinion updates, which prevents an exact closed recursion in unconditional expectations.
To obtain a tractable characterization, we introduced a mean-field decoupling approximation (MFA) leading to a linear recursion for the expected opinion trajectory. Simulations validated that this approximation captures the main statistical effects of the diffusion-opinion interaction. We also developed scalable tree-based reductions that substantially reduce computational complexity while preserving key structural properties of cascade-driven influence.
Finally, by comparing FJC with the classical FJ model on SBM networks with two communities and controlled initial opinion configurations, we highlighted a fundamental difference between global averaging and cascade-mediated exposure. Whereas FJ typically promotes homogenization, FJC filters influence through realized cascade paths, enabling regimes of bounded assimilation and persistent polarization depending on community structure and susceptibility patterns. Overall, the proposed framework provides a foundation for studying how cascade-driven interactions reshape collective opinion landscapes in community-structured networks.

\begin{figure*}%[p]
\centering
\text{$\bm{\theta=0.1}$}\\[0.5em]  % <- title above
    %\subfloat[$\theta=0.01$]{
        \includegraphics[width=0.8\textwidth]%{figures/opinions_SBM_2communities_theta01.pdf}\vspace{-5pt}
        {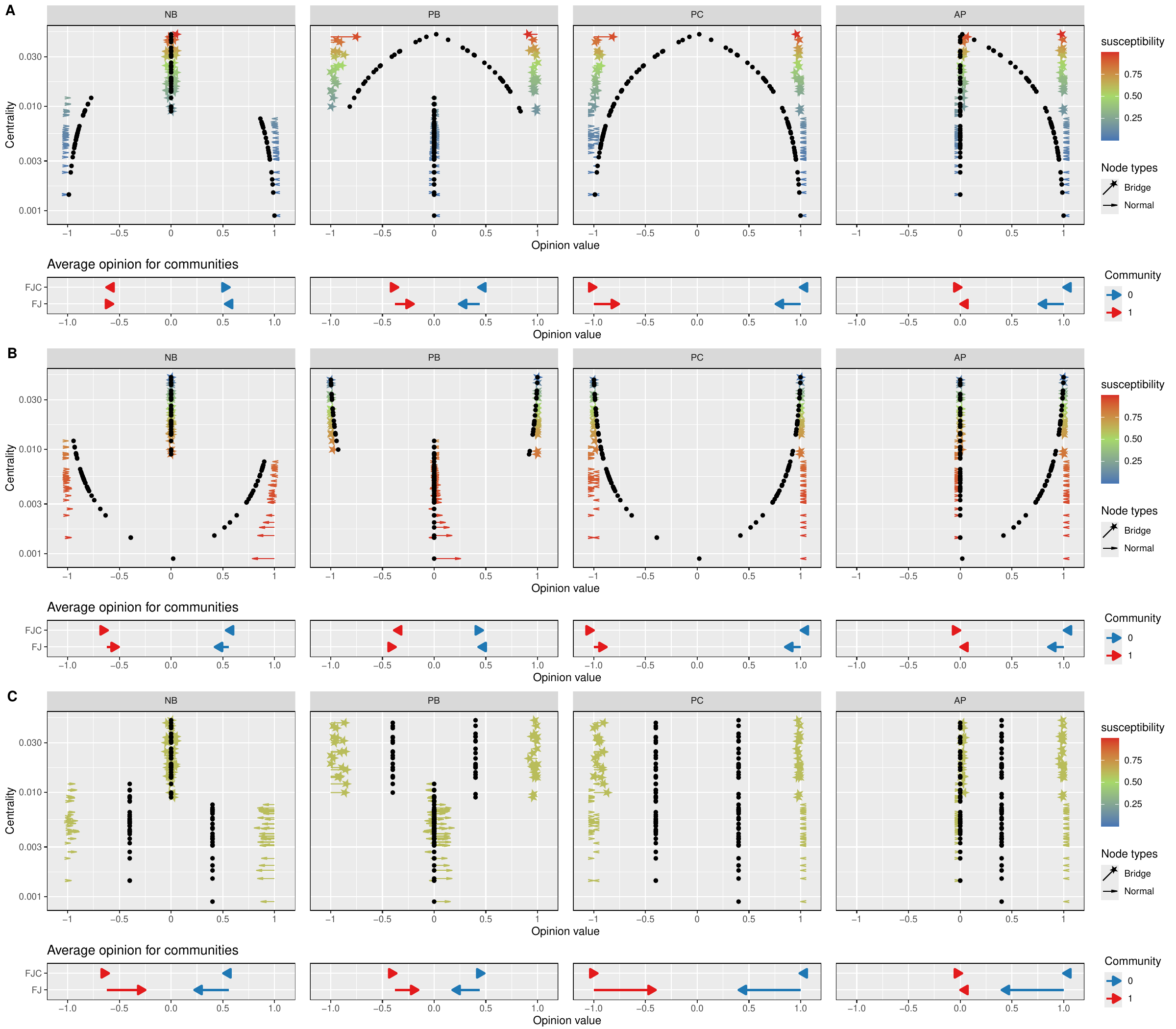}\vspace{-5pt}
    % }\hfill
    % \subfloat[$\theta=0.1$]{
    %     \includegraphics[width=0.9\textwidth]{figures/polVec_theta01_all.pdf}\vspace{-5pt}
    % }\hfill    
    % \subfloat[$\theta=0.5$]{
    %     \centering
    %     \includegraphics[width=0.9\textwidth]{figures/polVec_theta05_all.pdf}\vspace{-5pt}
    % }\hfill
 \caption{Opinions vs. node centrality with $\theta = 0.1$. Each arrow represents a node, starting at its initial opinion (prejudice) and ending at its final opinion under the FJC model. Arrow color indicates node susceptibility; black dots show the final opinions under the FJ model. (A) $\lambda_i \propto \mathrm{B}_i$; (B) $\lambda_i \propto \mathrm{B}_i^{-1}$; (C) $\lambda_i = 0.6$.}
 % \caption{Polarizing opinions vs nodes centrality. Each arrow corresponds to one node in the graph. An arrow starts at the initial prejudice and ends at the final opinion with the FJC model. The color of the arrow corresponds to the susceptibility assigned to the node. The black points represent the final opinion of the corresponding nodes with the FJ model. (A) $\lambda_i \sim \mathrm{P}_i$ (B) $\lambda_i \sim \mathrm{P}_i^{-1}$ (C) $\lambda_i = 0.6$ (i) $\theta =0.01$ (ii) $\theta=0.8$. In each panel, on the left the opinions yielding $P_4$-polarization, on the right the opinions yielding $P_2,P_3$-polarization. With the dotted and dashed line we denote the average initial and final opinion with the FJC, respectively.}
 \label{fig:results_01}
 \vspace{-10pt}
\end{figure*}

\begin{figure*}%[p]
\centering
\text{$\bm{\theta=0.5}$}\\[0.5em]  % <- title above
        \includegraphics[width=0.8\textwidth]%{figures/opinions_SBM_2communities_theta05.pdf}\vspace{-5pt}
        {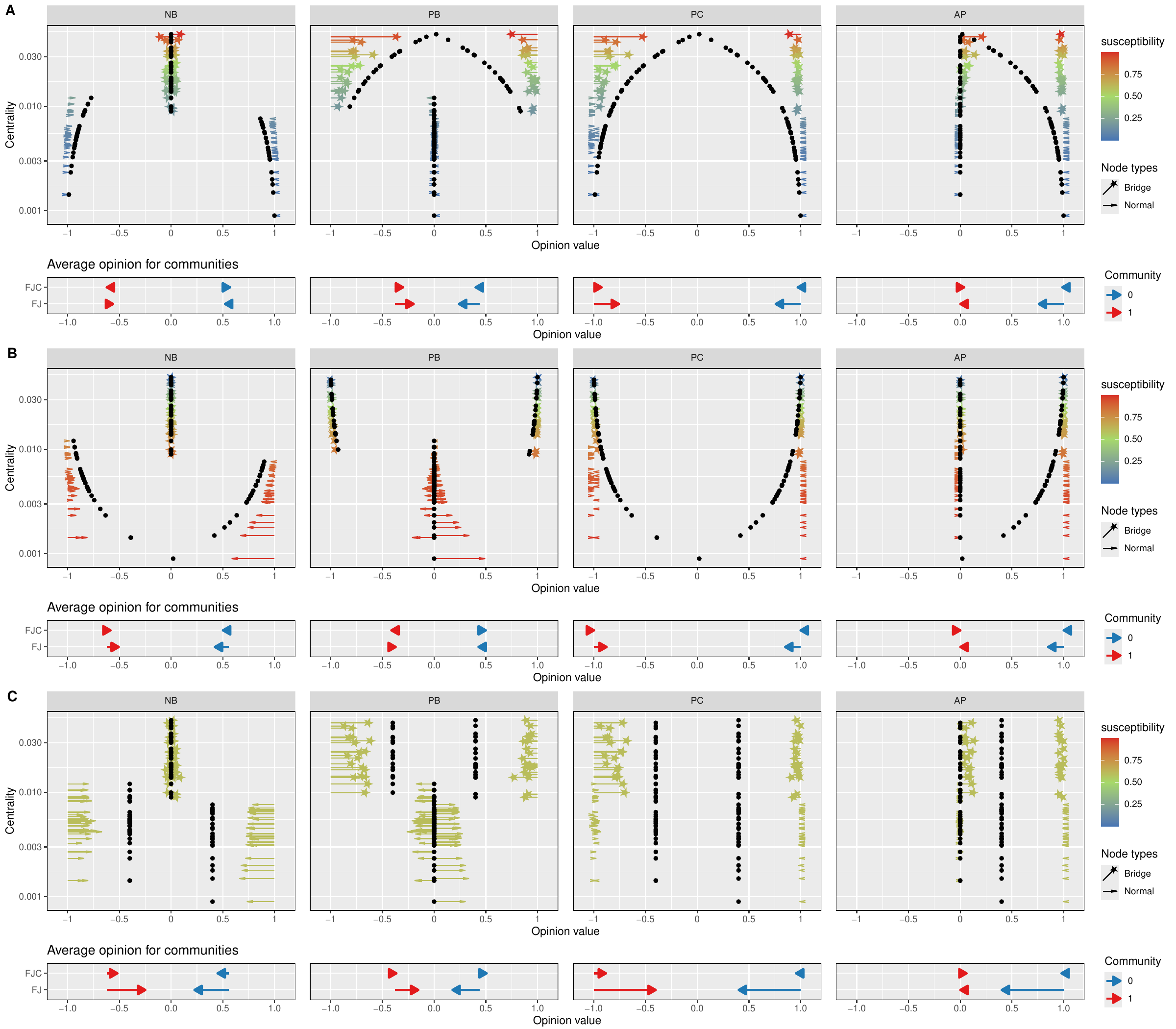}\vspace{-5pt}
 \caption{Opinions vs. node centrality with $\theta = 0.5$. Each arrow represents a node, starting at its initial opinion (prejudice) and ending at its final opinion under the FJC model. Arrow color indicates node susceptibility; black dots show the final opinions under the FJ model.  (A) $\lambda_i \propto \mathrm{B}_i$; (B) $\lambda_i \propto \mathrm{B}_i^{-1}$; (C) $\lambda_i = 0.6$. }
 \label{fig:results_05}
 \vspace{-10pt}
\end{figure*}

\vspace{-5pt}
\bibliography{references.bib}{}
\bibliographystyle{IEEEtran}
%RICORDARSI DI AGGIUNGERE BIO

%\vskip -2.5\baselineskip plus -1fil
\clearpage
%\appendix

%In this section, we will present some of the proofs that were omitted or shortened in the body of the paper.

\end{document}